\newcommand{\FF}{\mathcal{F}}
\def\C{\mathbb{C}}
\def\Z{\mathbb{Z}}
\def\N{\mathbb{N}}
\def\R{\mathbb{R}}
\newcommand{\PP}{\mathcal P}
\def\CC{{\mathcal C}}
\def\e{{\rm e}}
\def\eps{\varepsilon}
\def\ni{\noindent}
\def\op{{\rm op}}
\def\supp{{\rm supp\,}}
\def\ol{\overline}
\def\d{{\rm d}}
\def\op_#1{\mathrel{\mathop{{\rm op}_{#1}}}}
\def\build#1_#2^#3{\mathrel{
\mathop{\kern 0pt#1}\limits_{#2}^{#3}}}
\def\td_#1,#2{\mathrel{
\mathop{\build\longrightarrow_{#1\rightarrow #2}^{}}}}
\def\lim_#1,#2{\mathrel{
\mathop{\build{\rm lim}_{#1\rightarrow#2}^{}}}}
\def\limsup_#1,#2{\mathrel{
\mathop{\build{\rm limsup}_{#1\rightarrow#2}^{}}}}
\def\liminf_#1,#2{\mathrel{
\mathop{\build{\rm liminf}_{#1\rightarrow#2}^{}}}}
\def\aref#1{(\ref{#1})}
\def\eps{\varepsilon}
\newtheorem{prop}{Proposition}[section]
\newtheorem{theorem}[prop]{Theorem}
\newtheorem{lem}[prop]{Lemma}
\theoremstyle{definition}
\newtheorem{defi}[prop]{Definition}
\newtheorem{rem}[prop]{Remark}
\def\1{{\bf 1}}
\def\0{{\bf 0}}
\def\ni{\noindent}
\def\op{{\rm op}}
\def\ol{\overline}
\def\d{{\rm d}}
\numberwithin{equation}{section} %pour numeroter les equations par section
\begin{document}

\title{Degenerated codimension~1 crossings and resolvent estimates}

\author{Thomas \textsc{Duyckaerts}
\footnote{Universit\'e de Cergy-Pontoise, UMR CNRS 8088. Partially supported by the French ANR grants OndNonLin and ControlFlux},
\\
Clotilde \textsc{Fermanian Kammerer}
\footnote{Universit\'e Paris Est, UMR CNRS 8050},
\\
Thierry \textsc{Jecko}\footnote{Universit\'e de Cergy-Pontoise, UMR CNRS 8088. Supported by the 
French ANR grant BLAN08-3\_ 309070}}

\date{}

\maketitle

{\small{\bf Abstract}: {\it In this article, we analyze the propagation of Wigner measures of a family of solutions to a system of  semi-classical pseudodifferential equations presenting eigenvalues crossings on hypersurfaces.   We prove the propagation along classical trajectories under a geometric condition which is satisfied for example as soon as the Hamiltonian vector fields  are transverse or tangent at finite order to the crossing set. We derive resolvent estimates for semi-classical Schr\"odinger operator with matrix-valued potential under a geometric condition of the same type on the crossing set and we analyze examples of degenerate situations where one can prove transfers between the modes.
}}

\section{Introduction and main results}

\subsection{Resolvent estimate and nontrapping condition}

In this paper, we are concerned with the semi-classical
Schr\"odinger operator acting in $L^2(\R^d,\C^N)$,%TD j'ai remis le 1/2 des versions precedentes
\begin{equation}
P(\eps) =-\frac{\eps^2}{2}\Delta \cdot {\rm Id}+M(x)
\end{equation}
where ${\rm Id}$ is the $N\times N$ identity matrix and $M$ is a smooth, self-adjoint $N\times N$ matrix-valued potential. We require a long range behaviour of $M$: the matrix~$M$ has a
limit $M_\infty$ at infinity and there exists $\rho>0$ such that
\begin{eqnarray}
 \label{hyp:lr}
  &  \forall
  \alpha\in\N^d,\;\, \forall x\in\R^d,\;\;\left|\partial_x^\alpha\, (M(x)-M_\infty)\right|=O_\alpha\left(\langle x\rangle ^{-\rho-|\alpha|}\right),&
\end{eqnarray}

\ni where $\langle x\rangle =(1+|x|^2)^{1/2}$. %TD j'ai enleve: and the norm $\|.\|$ denotes the
%operator norm on the algebra $\C^{N,N}$ of the $N\times N$ matrix
%with entries in $\C$.
The operator $P(\eps)$ is self-adjoint with domain $H^2(\R^d,\C^N)$ (see \cite{RS2}). We define its resolvent
$$R(z,\eps)=(P(\eps)-z)^{-1}$$ for $z$ in the resolvent set.
%TD: pas besoin des espaces L^2 a poids avec l'ecriture que l'on a choisi, j'ai mis la definition plus tard
As already seen in \cite{J6}, the limiting absorption principle is valid on any interval
$I\Subset (\|M_\infty\|_\infty;+\infty)$. This means that, for any $s>1/2$ and any fixed $\eps >0$,
\begin{equation}\label{lap}
\build\sup _{\textrm{Re}\, z\in I\, ,\, \textrm{Im}\, z\neq 0}^{}\|\langle x\rangle^{-s}
R(z,\eps )\langle x\rangle^{-s}\|_{L^2\rightarrow L^2}\ <\ \infty \, .
\end{equation}
It turns out that the size of the previous supremum with respect to $\eps$ is
important for scattering theory. Roughly speaking, it is $O(\eps
^{-1})$ for non resonant scattering and larger, up to some $\exp (c\eps
^{-1})$ with $c>0$, when a resonance is present. In the scalar case (cf. \cite{rt,w87,w}),
the non resonant situation is characterized in term of a nontrapping condition on the classical
trajectories of the Hamiltonian field associated to the operator. In the matricial case, the 
treatment is much more complicated. Known results depend on the codimension of the eigenvalues 
crossings: see \cite{J1} (no crossing), \cite{J2,J6} (codimension one crossing), and \cite{FR}
(codimension two crossing). Here we focus on the second situation.

%TD: j'ai reformul\'e la phrase pr\'ec\'edente pour parler tout de suite de condition de non-capture
%This characterization was later obtained in various scalar situation, including changes on the metric defining the Laplace operator,
%the geometry of the underlying space and the assumptions on the potential (see \cite{GeMa,Ge90,VaZw00,w,Bu,Duy} and reference therein).
%Scalar potentials with Coulomb
%singularities are treated in \cite{cjk}. 
%
% voir plus loin !

\ni For codimension one eigenvalues crossing, an assumption on the
matricial structure of $M$ was used in \cite{J6}. Our aim in
the present paper is to weaken this assumption as much as possible.
%TD: dans le paragraphe suivant, j'ai repris la definition de $M$ de l'intro de Clotilde. Pour le reste, il manquait quelques precisions, et l'ensemble $\C$ passait d'un sous-ensemble de $\R^d$ \`a un sous-ensemble de $T^*\R^d$. J'ai essay\'e de r\'egler ces probl\`emes mais le r\'esultat est \`a am\'eliorer!
As in \cite{J6}, we focus on the situation where there exists $m\in \{1,\ldots, N\}$ such that 
\begin{equation}
\label{formeM}
M(x)=\sum_{j=1}^m E_j(x)\Pi_j(x), 
\end{equation}
where, denoting by $\C^{N,N}$ the algebra of $N\times N$ complex matrices, 
\begin{equation}
\label{hyp:regularity}
E_1, \ldots ,E_m \in C^{\infty}\left(\R^d,\R\right), \quad \Pi_1,\ldots,\Pi_m\in C^{\infty}\left(\R^d,\C^{N,N}\right),
\end{equation}
\begin{equation}
\label{hyp:projectors}
\text{and for all $x$, the $\Pi_j(x)$ are orthogonal projectors satisfying $\Pi_j(x)\Pi_k(x)=0$ if $j\neq k$.}
\end{equation}

\ni We will also assume that if $j\neq k$, $E_j$ differs from $E_k$ at least in an large open subset of $\R^d$.
 In~\cite{J6} it is shown that this situation occurs under reasonable assumptions in the case of codimension $1$ crossing.
 We will call it a \textit{smooth crossing} by opposition to situations (occurring in codimension $2$ or $3$ crossings) where
 the eigenvalues and/or the eigenprojections develop singularities.

\ni The \textit{crossing set} is the closed subset of $\R^d$
$$\CC=\{x\in \R^d;\; \exists j,k; \,\; E_j(x)=E_k(x)\text{ and } j\neq k\}.$$
 We will also call \textit{crossing set} the corresponding subset $\Gamma$ of $T^*\R^d$:
$$ \Gamma=\{(x,\xi)\in T^{*}\R^d, \; x\in \CC\}=\bigcup_{j=1}^m \Gamma_j,\text{ where } \Gamma_j=\{(x,\xi)\in T^*\R^d;\;
 \exists k\neq j,\; E_j(x)=E_k(x)\}.$$
\ni We say that the crossing is of codimension~$1$ when~$\CC$ is a smooth codimension~$1$ submanifold of~$\R^d$,
which is equivalent to say that~$\Gamma$ is a smooth codimension~$1$ submanifold of~$T^*\R^d$.
In the sequel we will make a weaker assumption, assuming that each of the~$\Gamma_j$ is included in a codimension~$1$
submanifold~$\Sigma_j$. This covers of course the case of codimension~$1$ crossing, but also the cases of
codimension~$2$ and~$3$ crossing, when the crossing is smooth. We emphasize however that this smoothness condition is
not satisfied for generic codimension~$2$ and~$3$ crossing, and that our result has more impact in the codimension~$1$ case.

% Let us first explain what we call a codimension one eigenvalues
% crossing (see \cite{J6} for details on this definition and see \cite{ha} for its local version).
% We say that the potential $M$ has a codimension $1$ crossing if there exist
% some $m\in \{1;\cdots ;N\}$, $m$ real-valued, smooth functions $E_1,
% \cdots ,E_m$ on $\R^d$, $m$ projection-valued, smooth functions
% $\Pi_1,\cdots$, $\Pi_m$ on $\R^d$, and some subset $\CC$ of $\R^d$
% such that the following two properties are satisfied.
% %
% \begin{enumerate}
% \item[a)] For all $x\in \R^d\setminus \CC$, $E_1(x), \cdots ,E_m(x)$ are the distinct
% eigenvalues of $M(x)$ and $\Pi_1(x),\cdots ,\Pi_m(x)$ the associated
% orthogonal eigenprojections.
% \item[b)] The subset $\CC$ is a codimension $1$ submanifold of $\R^d$.
% \end{enumerate}
%
\ni Consider for $j=1\ldots m$ the eigenvalues $\lambda_j(x,\xi)=\frac 12|\xi|^2+E_j(x)$ of the semi-classical symbol of $P(\eps)$. We denote its  Hamilton field by
$$H_{j}(x,\xi )=\left(\nabla _\xi\lambda_j (x,\xi ),-\nabla _x\lambda_j (x,\xi )
\right)=\left(\xi,-\nabla_xE_j(x)\right).$$

\ni Let~$\Sigma _j(\infty ;H_j)$ be the set of points~$(x,\xi)$ in~$\Sigma_j$ where~$H_j$ is tangent
at infinite order to~$\Sigma_j$. If~$\Sigma_j$ is given (locally) by the equation~$\gamma_j(x)=0$,
then~$\Sigma _j(\infty ;H_j)$ is (locally) the set of point such that~$H_j^{k}\gamma_j(x)=0$, for all~$k$ in~$\N$.

\ni Our first result gives the desired characterization of the non-resonant situation under a purely geometric condition.
For the sake of clarity we do not give yet the stronger condition possible (see Theorem~\ref{theo:resest} below).

\ni Let $\rho_j (t;x^\ast ,\xi^\ast )=\left(x_j(t;x^\ast,\xi^\ast),\xi_j(t;x^\ast,\xi^\ast)\right)$ be the maximal solution of the Hamilton equation $\dot{\rho}_j:=d\rho_j/dt=H_j(\rho_j)$
with initial condition $\rho_j(0)=(x^\ast ;\xi ^\ast )$. Observe that the flow $\rho _j$ is complete since $E_j$
is smooth. We say that  $\lambda_j$ (or $H_j$ or $\rho_j$) is
\textit{non-trapping} at energy $E\in\R$ if, for all $(x^\ast ;\xi ^\ast )\in
\lambda_j ^{-1}(E)$,
\begin{equation}\label{non-capture}
\lim_{|t|},{\infty}\left|x_j(t;x^\ast ;\xi^\ast )\right|=+\infty.
\end{equation}

\begin{theorem}\label{theo:resestbis}
Let $M$ satisfy \eqref{hyp:lr}, \eqref{formeM}, \eqref{hyp:regularity}, and \eqref{hyp:projectors}. Assume that for all $j$,
$\Gamma _j$ is included in a smooth submanifold $\Sigma_j$ of codimension $1$ of $T^*\R^d$. Take an open interval
$I_0\subset(\|M_\infty\|_\infty;+\infty)$. Assume that, for all $j\in \{1,\cdots ,m\}$, $\Sigma _j(\infty ;H_j)\cap\lambda_j^{-1}(I_0)$
is finite or countable. Then, the following conditions are equivalent
\begin{enumerate}
\item\label{nontrapping} for all $j\in \{1,\cdots ,m\}$
and all $E\in I_0$, $\lambda _j$ is non trapping at energy $E$;
\item\label{resolvent_estimate} for all interval $I\Subset I_0$, for all $s>1/2$, there exist $\eps_0>0$ and $C_{s,I}>0$
such that, for all $\epsilon\in(0;\epsilon _0]$,
\begin{equation}\label{resest2}
\build\sup _{\textrm{Re}\, z\in I\, ,\, \textrm{Im}\, z\neq 0}^{}\|\langle x\rangle^{-s}
R(z,\eps )\langle x\rangle^{-s}\|_{L^2\to L^2}\ \leq \ \frac{C_{s,I}}{\eps}\, .
\end{equation}
\end{enumerate}
\end{theorem}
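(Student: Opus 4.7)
The theorem is an equivalence between a dynamical non-trapping condition and a semiclassical resolvent estimate, so the plan is to treat each implication by contradiction, with the propagation of Wigner measures (the main result of the paper, applicable thanks to the countability hypothesis on $\Sigma_j(\infty;H_j)\cap\lambda_j^{-1}(I_0)$) as the central tool.

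\textbf{Direction (\ref{resolvent_estimate}) $\Rightarrow$ (\ref{nontrapping}).} Assume some $\lambda_j$ is trapping at an energy $E\in I_0$, so that there is a phase-space point $(x^\ast,\xi^\ast)\in\lambda_j^{-1}(E)$ whose forward or backward orbit stays in a compact set. The plan is to apply $R(z,\eps)$ (with $\operatorname{Re}z=E$, $\operatorname{Im}z=\pm i\eps$) to a well-chosen family $f_\eps$: for instance a semiclassical coherent state microlocalized at $(x^\ast,\xi^\ast)$ with polarization in the range of $\Pi_j(x^\ast)$. The associated $u_\eps=R(z,\eps)f_\eps$ admits a non-trivial Wigner measure $\mu_j$ supported on the $\lambda_j$-orbit of $(x^\ast,\xi^\ast)$. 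Egorov's theorem in the scalar block (and, across crossings, the paper's propagation theorem) shows that this measure is propagated along the flow, hence carries strictly positive mass on a compact set for all times $t$ in an unbounded set. Integrating yields $\|\langle x\rangle^{-s}u_\eps\|_{L^2}\gg \eps^{-1}\|\langle x\rangle^{s}f_\eps\|_{L^2}$, contradicting~\eqref{resest2}.

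\textbf{Direction (\ref{nontrapping}) $\Rightarrow$ (\ref{resolvent_estimate}).} Argue again by contradiction. If \eqref{resest2} fails, a standard rescaling produces sequences $\eps_n\to 0$, $z_n\to E_0\in\overline I$ with $\operatorname{Im}z_n\ne 0$, and $u_n$ with $\|\langle x\rangle^{-s}u_n\|_{L^2}=1$ while $f_n:=(P(\eps_n)-z_n)u_n$ satisfies $\|\langle x\rangle^{s}f_n\|_{L^2}=o(\eps_n)$. Extract a (non-zero) semiclassical measure $\mu$ from the weighted sequence $(\langle x\rangle^{-s}u_n)$ on $T^\ast\R^d$. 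Using the matrix structure~\eqref{formeM}--\eqref{hyp:projectors}, decompose $\mu=\sum_{j=1}^m\mu_j$ with $\mu_j$ scalar, supported in $\lambda_j^{-1}(E_0)$. The long-range hypothesis~\eqref{hyp:lr} together with absence of radial trapping (standard outgoing/incoming argument at spatial infinity) confines $\operatorname{supp}\mu$ to a compact subset of $T^\ast\R^d$. Apply the paper's propagation theorem to each $\mu_j$: outside $\Gamma$ it is transported by the Hamiltonian flow of $\lambda_j$, while the countability assumption on $\Sigma_j(\infty;H_j)\cap\lambda_j^{-1}(I_0)$ ensures that, on $\Gamma$, transfers between modes are described in a controlled (measure-theoretically negligible) way. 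By non-trapping~\eqref{non-capture} every trajectory of $H_j$ on $\lambda_j^{-1}(E_0)$ escapes to infinity, so the classical escape argument forces each $\mu_j$ to vanish on any compact set. Summing, $\mu\equiv 0$, contradicting $\mu(T^\ast\R^d)>0$.

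\textbf{Main obstacle.} The hard part is direction (\ref{nontrapping}) $\Rightarrow$ (\ref{resolvent_estimate}), and within it the propagation of $\mu_j$ across the crossing $\Gamma$. At points of $\Sigma_j(\infty;H_j)$ the vector field $H_j$ is flat with respect to $\Sigma_j$ and the usual transverse or finite-order-tangency arguments do not apply: Wigner mass could in principle stagnate or branch uncontrollably between modes. The assumption that $\Sigma_j(\infty;H_j)\cap\lambda_j^{-1}(I_0)$ be at most countable is exactly what is needed to treat this exceptional set as $\mu_j$-negligible, so that the propagation law holds for $\mu_j$-almost every trajectory and the escape-to-infinity argument can be closed. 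Making this rigorous—in particular, justifying that $\mu_j$ does not concentrate on the countable bad set—will require the careful matricial symbolic calculus developed earlier in the paper.
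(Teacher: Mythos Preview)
Your overall architecture for (\ref{nontrapping}) $\Rightarrow$ (\ref{resolvent_estimate}) matches the paper's: negate the estimate, extract a nonzero compactly supported Wigner measure, propagate, and contradict non-trapping. For the converse, the paper does \emph{not} analyze $R(z,\eps)f_\eps$ directly; it uses that \eqref{resest2} implies Kato $P(\eps)$-smoothness of $\langle x\rangle^{-s}$ on $I$, then launches a coherent state along the \emph{time-dependent} evolution $e^{-itP(\eps)/\eps}$, computes its Wigner measure via Theorem~\ref{theo:propagation}, and obtains $\int_{-T}^{T}\langle x_j(t)\rangle^{-2s}dt\le C_0$ for all $T$. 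Your proposed route through the resolvent of a coherent state is vaguer and would require its own justification.

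The real gap is in your last paragraph: you frame the countability hypothesis as a \emph{measure-theoretic} statement (``treat this exceptional set as $\mu_j$-negligible'', ``$\mu_j$ does not concentrate on the countable bad set''). That is not the mechanism, and as written it does not work: Wigner measures can perfectly well have atoms, so countability of a set says nothing a priori about its $\mu_j$-mass. The paper's argument is structural, not measure-theoretic. One first shows (Proposition~\ref{prop:prop}) that $H_j(\Pi_j\mu\Pi_j)=[R_j,\Pi_j\mu\Pi_j]+\nu_j$ with $\nu_j$ a source term supported in $\Gamma_j\subset\Sigma_j$ and absolutely continuous with respect to $\mu_j$. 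The key step (Proposition~\ref{P:invariance}) is then to prove, by an induction on the order of tangency of $H_j$ to $\Sigma_j$ \`a la Melrose--Sj\"ostrand, upgraded to a \emph{transfinite} induction, that $\supp\nu_j$ is forced into the set $\FF_j\subset\Sigma_j(\infty;H_j)$ of Definition~\ref{def:FF}. The countability assumption enters only at the very end, via a Baire-category argument (Proposition~\ref{P:vide}): each point of the countable set $\Sigma_j(\infty;H_j)$ lies on a hypersurface transverse to $H_j$, and Baire's theorem in the complete metric space $F(\beta)$ forces $\FF_j=\emptyset$. Hence $\nu_j=0$ identically and the clean transport equation holds everywhere, not merely $\mu_j$-a.e. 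This is what makes the escape argument close. Your sketch is missing precisely this inductive confinement of $\supp\nu_j$, which is the substance of the paper. (A minor point: the $\Pi_j\mu\Pi_j$ are matrix-valued; it is their \emph{trace} that is invariant under the $H_j$-flow.)
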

\begin{rem}
In particular, if for all $j$ the flow $H_{j}$ is (outside
$\{\xi=0\}$) transverse or tangent to finite order to the $j$th
crossing set $\Gamma _j$, then the conditions (\ref{nontrapping})
and (\ref{resolvent_estimate}) are equivalent.
\end{rem}
% \begin{rem}
% \label{rem:codim}{theo:resestbis}
% As will be clear from the proofs, we may assume in Theorem \ref{theo:resestbis} and in Theorem \ref{theo:resest} below that the crossing set $\Sigma$ is strictly included in a smooth codimension $1$ manifold $\widetilde{\Sigma}$, replacing the assumption on $\Sigma_{H_j}^{\infty}$ by the same assumption on $\widetilde{\Sigma}_{H_j}^{\infty}$. In particular, our result also holds  in the cases of codimension 2 and 3 eigenvalues crossing, when the crossing is smooth, in the sense that $M$ may be written as in \eqref{formeM}.
% \end{rem}

\ni Actually, we can construct, for each $j$, some subset $\FF_{j}$ of $\Sigma _j(\infty ;H_j)$, depending on $\Sigma _j$ and on $H_{j}$, such that Theorem~\ref{theo:resestbis} still holds true when the countability condition on $\Sigma _j(\infty ;H_j)\cap\lambda_j^{-1}(I_0)$ is replaced by the less restrictive condition $\FF_j\cap \lambda_j^{-1}(I_0)=\emptyset$.
A precise definition of the set $\FF_j$ is given
in Subsection~\ref{proof17} via Definitions~\ref{def:F(alpha)} and \ref{def:FF}.

\ni The conditions of Theorem~\ref{theo:resestbis} (and also of its refined version with the condition on $\FF_{j}$) bear only on the eigenvalues $\lambda_j$ and do not depend on the projectors~$\Pi_j$. We will now combine them with the condition on the matricial structure of~$M$ which was introduced in~\cite{J6}.
The potential~$M$ satisfies the~\textit{special condition} at the
crossing if, for all~$j\in\{1,\cdots ,m\}$, the projector~$\Pi_j$ is conormal
to~$\CC$, that is, for any~$x\in\CC$ and~$\xi$ in the
tangent space~$T_{x}\CC$ of~$\CC$ at~$x$, $\xi\cdot \nabla \Pi_j (x)=0$. Under this special condition at the
crossing and some technical one at infinity (in the~$x$ variable), that may be removed by the arguments of~\cite{FR}, it was proved in~\cite{J6},
that the conditions~\eqref{nontrapping} and~\eqref{resolvent_estimate} of Theorem~\ref{theo:resestbis} are equivalent.
The following theorem implies Theorem~\ref{theo:resestbis} and the result of~\cite{J6}.

%supremum in \eqref{lap} is some
% $s$-dependent $O(\eps ^{-1})$ if and only if, for all $j\in\{1,\cdots
% ,m\}$ and all $E\in I$, $\lambda _j$ is non-trapping at energy $E$.

% We still denote by $\CC$ the crossing viewed as an hypersurface of
% $T^\ast\R^d$. For $(x^*,\xi^*)\in\CC$, let
% %
% \begin{eqnarray}\label{eq:defJ}
% J(x^*,\xi^*)&=&\bigl\{j\in \{1,\cdots ,m\};\, \exists k\in
% \{1,\cdots ,m\}\setminus\{j\};\, (\lambda _j-\lambda
% _k)(x^*,\xi^*)=0\bigr\}\, .
% \end{eqnarray}
% %
% This is roughly speaking the set of eigenvalues that cross at $(x^*,\xi^*)\in\CC$. For $j\in \{1,\cdots ,m\}$, let $\CC _j=\{(x^*,\xi^*)\in\CC;\, j\in J(x^*,\xi^*)\}$. This is the set on which the eigenvalue $\lambda _j$ crosses another one.
% In $\CC$, we isolate a ``bad'' subset denoted by $\CC _f$ (precisely defined in \eqref{zone-genante}).
%
\begin{theorem}\label{theo:resest}
Let $M$ satisfy \eqref{hyp:lr}, \eqref{formeM},
\eqref{hyp:regularity} and \eqref{hyp:projectors}. Let~$I_0$ be an
open interval included in~$(\|M_\infty\|_\infty;+\infty)$ and
assume that, for all $j\in \{1,\cdots ,m\}$, the function $(x,\xi
)\mapsto \xi\cdot \nabla \Pi _j(x)$ vanishes on
$\FF_j\cap\lambda_j^{-1}(I_0)$. Then, the conditions~\eqref{nontrapping} and~\eqref{resolvent_estimate} of
Theorem~\ref{theo:resestbis} are equivalent.
% \begin{enumerate}
% \item for all $j\in \{1,\cdots ,m\}$
% and all $E\in I_0$, $\lambda _j$ is non trapping at energy $E$ (cf. \eqref{non-capture});
% \item for all interval $I\subset\subset I_0$, for all $s>1/2$, there exist $\eps_0>0$ and $C_{s,I}>0$
% such that, for all $\epsilon\in(0;\epsilon _0]$,
% \begin{equation}\label{resest2}
% \build\sup _{\Re z\in I\, ,\, \Im z\neq 0}^{}\|\langle x\rangle^{-s}
% R(z,\eps )\langle x\rangle^{-s}\|\ \leq \ {C_{s,I}\over
% \eps}\, .
% \end{equation}
%
% \end{enumerate}
\end{theorem}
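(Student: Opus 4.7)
The plan is to prove the two implications separately. The reverse implication \eqref{resolvent_estimate}$\Rightarrow$\eqref{nontrapping} should follow by a now-standard Gaussian-beam contradiction argument: if a trajectory of $\lambda_j$ is trapped at some energy $E\in I_0$, I build a quasimode $\psi_\eps$ concentrated on this orbit and taking values in the range of $\Pi_j$, so that in this mode $P(\eps)$ reduces to a scalar semiclassical Schr\"odinger operator away from $\CC$. A WKB construction yields $\|\psi_\eps\|_2=1$ and $\|(P(\eps)-E)\psi_\eps\|_2=o(\eps)$, which, after picking $z_\eps$ in the upper half-plane within distance $O(\eps)$ of $E$, contradicts \eqref{resest2} applied to $f_\eps:=(P(\eps)-z_\eps)\psi_\eps$.

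For the direct implication I would argue by contradiction. If \eqref{resest2} fails, there exist sequences $\eps_n\to 0$, $z_n$ with $\textrm{Re}\,z_n\to z_\infty\in\overline{I}$, and $f_n\in L^2$ with $\|f_n\|_2\leq 1$, such that $\eps_n\|\langle x\rangle^{-s}R(z_n,\eps_n)\langle x\rangle^{-s}f_n\|_2\to\infty$. Renormalizing, I obtain $w_n$ satisfying
\[
\|\langle x\rangle^{-s}w_n\|_2=1,\qquad (P(\eps_n)-z_n)w_n=g_n,\qquad \|g_n\|_2=o(\eps_n).
\]
I then extract a semi-classical (Wigner) measure $\mu$ from $(w_n)$ tested against the weight $\langle x\rangle^{-s}$; thanks to the block structure \eqref{formeM}, this measure splits as $\mu=\sum_{j=1}^m \mu_j$, with each $\mu_j$ a positive scalar measure supported on the characteristic manifold $\lambda_j^{-1}(z_\infty)$.

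The key input is then the Wigner-measure propagation theorem announced earlier in the paper. Under the hypothesis $\xi\cdot\nabla\Pi_j(x)=0$ on $\FF_j\cap\lambda_j^{-1}(I_0)$, each $\mu_j$ is invariant under the Hamilton flow of $\lambda_j$, with no leakage into the other modes at the crossing: the off-diagonal transfer operator vanishes precisely because of this geometric condition. Combined with \eqref{nontrapping}, which sends every point in $\supp\mu_j$ to spatial infinity under the flow, and with the long-range decay \eqref{hyp:lr} (which, by the usual argument at infinity, guarantees that $\mu_j$ carries no mass there), I conclude $\mu_j=0$ for every $j$. But the weighted normalization $\|\langle x\rangle^{-s}w_n\|_2=1$ forces $\mu$ to carry positive mass on some compact set, a contradiction.

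The hard part will be the propagation step under the weakened assumption on $\FF_j$: at points where $H_j$ is tangent to $\Sigma_j$ to infinite order, the standard transverse or finite-order-tangent analysis of a codimension-$1$ crossing breaks down, and one must check that the conormality hypothesis on $\Pi_j$ along these degenerate loci is enough to annihilate the off-diagonal commutator responsible for inter-band transitions in the Wigner calculus. This delicate point is precisely the content of the propagation result invoked from an earlier section; the remaining steps (quasimode construction, extraction of measures, and absence of mass at infinity) are by now fairly classical in the scalar and smooth-codimension-$1$-crossing settings.
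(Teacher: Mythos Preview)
Your outline for the direct implication \eqref{nontrapping}$\Rightarrow$\eqref{resolvent_estimate} is essentially the paper's argument, but you gloss over the matricial structure of the Wigner measure. The measure $\mu$ is positive \emph{Hermitian matrix}-valued, and the blocks $\mu_j:=\Pi_j\mu\Pi_j$ are not scalar; moreover the identity $\mu=\sum_j\mu_j$ holds only off the crossing set, since on $\Gamma$ the off-diagonal pieces $\Pi_k\mu\Pi_j$ need not vanish. The propagation theorem does \emph{not} say that $\mu_j$ is flow-invariant: it gives $H_j\mu_j=[R_j,\mu_j]$, so only $\mathrm{tr}\,\mu_j$ is invariant under the $H_j$-flow. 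The paper then argues: non-trapping plus compact support kill $\mathrm{tr}\,\mu_j$, hence (positivity) $\mu_j=0$ for each $j$, and finally the off-diagonal blocks vanish by absolute continuity of $\Pi_k\mu\Pi_j$ with respect to $\Pi_k\mu\Pi_k$ and $\Pi_j\mu\Pi_j$. These steps are missing from your sketch. The compact-support step and the non-vanishing of $\mu$ are also more delicate than you indicate: the paper localizes via a cutoff $g_n=\tau f_n$ to get a genuinely $L^2$-bounded sequence, and the decay at infinity \eqref{controle-infini} comes from a rescaled Mourre estimate proved in \cite{J6}.

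For the reverse implication \eqref{resolvent_estimate}$\Rightarrow$\eqref{nontrapping} there is a real gap. A Gaussian-beam or WKB quasimode satisfying $\|(P(\eps)-E)\psi_\eps\|=o(\eps)$ with $\|\psi_\eps\|=1$ can be built on a \emph{periodic} orbit, not on an arbitrary trapped one; a trapped trajectory may be non-periodic, even chaotic, and then no stationary quasimode is available. In addition, the trapped orbit may well meet the crossing $\CC$, so the claim that ``in this mode $P(\eps)$ reduces to a scalar operator'' is unjustified precisely where it matters. The paper proceeds quite differently: the resolvent bound \eqref{resest2} yields Kato $P(\eps)$-smoothness of $\langle x\rangle^{-s}$ on $I$, i.e.\ a uniform-in-$\eps$ time-integral bound for the evolution. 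One then launches a coherent state with Wigner measure $c_0\,\delta_{(x_j(0),\xi_j(0))}\Pi_j$ and \emph{uses the propagation theorem again} (so the hypothesis on $\FF_j$ is needed in this direction too) to compute the Wigner measure along the flow; passing to the limit in the smoothness bound gives $\int_{-T}^{T}\langle x_j(t)\rangle^{-2s}\,\mathrm{d}t\leq C_0$ for all $T$, which forces $|x_j(t)|\to\infty$. Your quasimode shortcut bypasses none of this and, as stated, does not go through.
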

%
%(cf. \eqref{zone-genante})

%TD: j'ai enleve la remarque suivante car je l'ai fait dans le theoreme!

% \noindent {\bf Remark}: As already mentioned in \cite{J6}, one can
% remplace the assumptions on the crossing by weaker, energy-dependent
% assumptions. For instance, we only need the vanishing of the mentioned
% function on $\CC _f\cap\CC _j\build\cap_{1\leq j\leq m}^{}\lambda
% _j^{-1}(I_0)$.\\
\ni As already mentioned, for precise definition of the set $\FF_j$, see Definition~\ref{def:F(alpha)} and Section~\ref{proof17}. The proof of Theorem~\ref{theo:resest} crucially relies on the propagation result in Theorem~\ref{theo:propagation} below. 
\begin{rem}\label{special}
The special condition at the crossing from~\cite{J6} requires the vanishing of the functions~$(x,\xi )\mapsto \xi\cdot \nabla \Pi _j(x)$
on points where the Hamilton field~$H_j$ is tangent to~$\CC$. Here
we assume the same vanishing on the much smaller set~$\FF_j$. A typical situation where $\FF_j$ is \textit{not} empty is when $\Sigma_j$ contains a piece of a trajectory of the Hamiltonian field $H_j$. 
In Section~\ref{S:degenerated}, we produce an example of this kind for which 
Theorem~\ref{theo:propagation} does not apply, its conclusion is even false, and the vanishing condition of Theorem~\ref{theo:resest} is not satisfied. This strongly suggests 
that Theorem~\ref{theo:resest} does not holds true if this vanishing condition is removed. 
\end{rem}
\begin{rem} Thanks to \eqref{hyp:lr}, there exists $\lambda_0>0$ such that the function 
$(x,\xi)\mapsto x\cdot\xi \cdot {\rm Id}$ is a global escape function at all energy $\lambda\in (\lambda_0,+\infty)$ for the matrix-valued symbol $p:(x,\xi)\mapsto |\xi |^2\cdot {\rm Id}+M(x)$ in the sense of \cite{J2}. 
By Theorem 2.3 in \cite{J2} (which actually holds true with the same proof for all matricial dimension $N$), 
we get (ii) of Theorem~\ref{theo:resestbis} for $I_0=(\lambda_0 ;+\infty)$. Assume now that the assumptions of Theorem~\ref{theo:resest} are satisfied. We derive from Theorem~\ref{theo:resest} that all energy $\lambda\in (\lambda_0,+\infty)$ is non trapping for all fields $H_j$. Then, as in \cite{FR}, one can upperbound the resolvent in \eqref{resest2} by $C\eps^{-1}\lambda^{-1/2}$, where $C$ only depends on $\lambda_0$. Arguing as in \cite{FR}, one gets local in time $H^s$ estimates, smoothing effect and Strichartz estimates,
and one can prove existence and uniqueness of
solutions of non-linear semi-classical Schr\"odinger equation with
matrix-valued potentials in a situation where the potential do not
decrease at infinity.
\end{rem}

% Since this result is
% of interest by itself and extends a result by Hagedorn in \cite{ha}, we present it in more general framework than in Theorem~\ref{theo:resest}.

\subsection{Codimension~1 crossings and Wigner measures}

%TD: la suite est reecrite
\ni A key argument in the proof of Theorem \ref{theo:resest} is a result on propagation of Wigner measures in
presence of degenerated codimension $1$ crossing.
We next present this result, which is of interest by itself. We will work in a general pseudodifferential framework,
as in \cite{[CdV]}, \cite{[CdV2]}, \cite{FG03} and \cite{F04}.  This framework contains the one of Theorem \ref{theo:resest}. We refer to \cite[Section 2]{[CdV]} for other applications.

\noindent We first recall a few facts about Wigner measures (see \cite{Ge93}, \cite{GeLe93}, \cite{GMMP}, \cite{LP}
or the survey \cite{CFMS}). Consider~$(\psi^\eps)_{\eps >0}$ a bounded family in the weighted $L^2$-space $L^{2}_{-s}(\R^d;\C^N):=L^{2}(\R^d;\C^N; \langle x\rangle ^{s}dx)$, for some $s\geq 0$.
%TD: j'ai un peu elague
Then there exists a positive hermitian Radon measure~$\mu$ and a sequence~$\eps_k$ going to~$0$ as~$k$ goes to~$+\infty$ such that
\begin{equation}
\label{def_mu}
\forall a\in {\cal C}_0^\infty(\R^{2d};\C^{N,N}),\quad  \left(\op_{\eps_k}(a) \psi^{\eps_k},\psi^{\eps_k}\right)\td_k,{+\infty}
{\rm tr}\,\int_{\R^{2d} }a(x,\xi)\d\mu(x,\xi).
\end{equation} 
Here~$\op_{\eps}(a)$ denotes the semi-classical Weyl quantization of~$a$, namely the operator defined by
\begin{equation}\label{eps-Weyl}
\op_\eps(a) f(x)=\int_{\R^{2d}} a\left(\frac{x+x'}{2},\xi\right){\rm
  e}^{\frac{i}{\eps}\xi\cdot(x-x')}f(x')\frac{\d x'\,\d\xi}{(2\pi\eps)^d}.
\end{equation}
We recall here that, when $a\in {\cal C}_0^\infty(\R^{2d})$, \eqref{eps-Weyl} defines an operator which is continuous, uniformly with respect to $\eps$, from~$L^{2}_{-s}$ to~$L^2$ and from~$L^{2}_{\rm loc}$ to~$L^{2}_{\rm loc}$. At many places in this paper, we shall use well-known properties of semi-classical pseudodifferential calculus. See \cite{ds,ma} for details.

\ni The matrix-valued measure~$\mu$ describes the oscillation of the sequence $(\psi^{\eps_k})_k$ which are exactly of size~$1/\eps_k$  or smaller.
Such measure is called a \textit{Wigner measure} associated to the family $(\psi^{\eps})_\eps$. It is a positive hermitian matrix-valued measure in the sense that
for all scalar positive smooth compactly supported test-function~$a$, the quantity~$\int a(x,\xi)\d\mu(x,\xi)$
 is a positive hermitian matrix.

\ni Let $m\in \{1,\ldots N\}$ and consider $m$ real-valued smooth
functions $(\lambda_j)_{j=1\ldots m}$, $m$ matrix valued smooth
functions $(\Pi_j)_{j=1\ldots m}$ on $\R^{2d}$.
 Again we assume that the $\Pi_j(x,\xi)$ are orthogonal projectors satisfying $\Pi_j\Pi_k\equiv 0$ if $j\neq k$. Let 
$$Q(x,\xi)=\sum_{j=1}^m \lambda_j(x,\xi) \Pi_j(x,\xi).$$
We will also assume that $Q$ satisfies, for some real $r_1$ and $r_2$, 
\begin{equation}
\label{H_Q}
\forall \alpha ,\beta\in\N^d, \exists C_{\alpha ,\beta}>0;\, \left|\partial_x^{\alpha}\partial_{\xi}^{\beta}Q(x,\xi)\right|\leq C_{\alpha ,\beta}(1+|x|)^{r_1-|\alpha|}(1+|\xi|)^{r_2-|\beta |}.
\end{equation} 
Let us mention that the result and the computations of this subsection are essentially local, so one can probably relax assumption \eqref{H_Q} on $Q$.

\ni We consider a family $(\psi^\eps)_{\eps>0}$ such that for all~$\eps$, $\psi^\eps$ belongs to the domain of~$\op_\eps(Q)$. We assume that the family is bounded in $L^{2}(\R^d ;\C^N)$ and 
satisfies, on some open subset $\Omega$ of $\R^d$, 
\begin{equation}\label{eq:psi}
\op_\eps(Q)\psi^\eps=o(\eps) \text{ in }L^2(\Omega,\C^N), \text{ as }\eps\to 0. 
\end{equation}

\ni We define again the crossing set $\Gamma$ by
\begin{equation}\label{1.8'}
 \Gamma=\bigcup_{j=1}^m \Gamma_j,\quad\Gamma_j=\{(x,\xi)\in T^*\Omega ;\; \exists k\neq j,
\; \lambda_j(x,\xi)=\lambda_k(x,\xi)\}
 \end{equation}
and we assume that for all $j$, $\Gamma_j$ is included in a codimension $1$ submanifold $\Sigma _j$. 

As above, for~$1\leq j\leq m$, we denote by $H_j$ the Hamiltonian vector fields associated with the functions~$\lambda_j$ and by $\Sigma _{j}(\infty ;H_j)$ the set of points~$(x,\xi)$ in~$\Sigma _j$ where~$H_j$ is tangent at infinite order to~$\Sigma _j$. 
We can define the previous closed subset $\FF_j$ of $\Sigma _{j}(\infty ;H_j)$, that contains all the characteristic curves of $H_j$ that are included in $\Sigma _j$. Recall that $\FF_j$ is empty if $\Sigma _{j}(\infty ;H_j)$ is at most countable. Here again, we refer to  Definitions~\ref{def:F(alpha)} and \ref{def:FF} for  a precise definition. \\
\ni Finally, for smooth matrix-valued functions
$a,b$ on $T^\ast\R^d$, the Poisson bracket $\{a,b\}$ is the
matrix-valued function defined by $\nabla _\xi a\cdot \nabla _x
b-\nabla _x a\cdot\nabla _\xi b$. Setting, for all $j$, 
\begin{eqnarray}\label{def:Bj}
B_j & = &- \frac{1}{2} \{\Pi _j,Q+\lambda _j{\rm
  Id}\}\\
  \label{def:Rj}
R_j & = & \bigl[\{\lambda_j,\Pi_j\},\Pi_j\bigr] +\frac{1}{2}\sum_k(\lambda_k-\lambda_j)\Pi_j\{\Pi_k,\Pi_k\}\Pi_j\, ,
  \end{eqnarray}
 our propagation result is the following 
\begin{theorem}\label{theo:propagation}
Let $\mu$ be any Wigner measure of the family~$(\psi^\eps)_{\eps >0}$
and~$j\in\{1,\cdots ,m\}$. Let~$\omega$ be a bounded open neighborhood of some~$(x^*,\xi^*)\in\Omega\times\R^d$ such that $\overline \omega\subset\Omega\times\R^d$.
 Assume that~$B_j$ vanishes on~$\FF_j\cap \omega$. Then as distributions on~$\omega$,
\begin{equation}\label{eq:propa}
H_j\left(\Pi_j\mu\Pi_j\right)\ =\ [R_j,\Pi_j\mu\Pi_j]\, .
\end{equation}
In particular, the trace of $\Pi_j\mu \Pi_j$ is invariant under the flow
of $H_j$.
\end{theorem}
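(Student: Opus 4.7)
\s The plan is to test the claimed identity against smooth matrix-valued symbols compactly supported in~$\omega$ and to exploit the equation $\op_\eps(Q)\psi^\eps=o(\eps)$ through a commutator argument. By cyclicity of the trace, pairing an arbitrary symbol $A$ against $\Pi_j\mu\Pi_j$ coincides with pairing $\Pi_j A\Pi_j$ against $\mu$, so it suffices to work with test symbols of the form $A=\Pi_j A_0\Pi_j$, $A_0\in C_c^\infty(\omega;\C^{N,N})$. For such $A$, the orthogonality $\Pi_j\Pi_k=\delta_{jk}\Pi_j$ yields $QA=AQ=\lambda_j A$, hence $[Q,A]\equiv 0$ pointwise.

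\s Using~\eqref{H_Q} and a cutoff supported in $\Omega$, the Weyl symbolic calculus provides the local expansion
\begin{equation*}
[\op_\eps(Q),\op_\eps(A)]\ =\ \op_\eps([Q,A])+\frac{\eps}{2i}\op_\eps\bigl(\{Q,A\}-\{A,Q\}\bigr)+O_{L^2}(\eps^2),
\end{equation*}
whose leading matrix commutator vanishes by the previous step. Pairing with $\psi^\eps$, dividing by $\eps$, and using the self-adjointness of $Q$ together with $\op_\eps(Q)\psi^\eps=o(\eps)$ locally near $\overline\omega$, the left-hand side is $o(1)$. Passing to a subsequence along which $\mu$ is realized,
\begin{equation*}
\int{\rm tr}\Bigl(\bigl(\{Q,A\}-\{A,Q\}\bigr)\,d\mu\Bigr)\ =\ 0.
\end{equation*}
The remaining algebraic task is to identify this with the distributional form of~\eqref{eq:propa} tested against~$A$. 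Expanding $\{Q,A\}-\{A,Q\}$ with $Q=\sum_k\lambda_k\Pi_k$ and $A=\Pi_j A_0\Pi_j$, and using the identity $\Pi_j(d\Pi_j)\Pi_j=0$ (from $\Pi_j^2=\Pi_j$) together with the orthogonality of the $\Pi_k$'s, the scalar contribution $\{\lambda_j,\,\cdot\,\}$ reassembles, after integration by parts, into $H_j(\Pi_j\mu\Pi_j)$, while the contributions in which derivatives fall on the projectors reorganize, after sandwiching with~$\Pi_j$, into the commutator $[R_j,\Pi_j\mu\Pi_j]$ with $R_j$ as in~\eqref{def:Rj}.

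\s The main obstacle is that this algebraic rearrangement generates residual off-diagonal terms proportional to $B_j=-\tfrac{1}{2}\{\Pi_j,Q+\lambda_j{\rm Id}\}$, coupled through factors of the form $(\lambda_k-\lambda_j)^{-1}$ which are singular on $\Sigma_j\supset\Gamma_j$. Away from $\Sigma_j(\infty;H_j)$, the field $H_j$ is transverse to $\Sigma_j$ at a finite order, so a flow-box normal form lets one propagate the measure across the crossing and reduce these $B_j$-contributions to a term localized on~$\FF_j$, the subset of $\Sigma_j(\infty;H_j)$ consisting of pieces of $H_j$-orbits entirely contained in~$\Sigma_j$ (Definitions~\ref{def:F(alpha)} and~\ref{def:FF}). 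The assumption $B_j\equiv 0$ on $\FF_j\cap\omega$ then precisely eliminates this residual, closing the argument. The final assertion on ${\rm tr}(\Pi_j\mu\Pi_j)$ follows from~\eqref{eq:propa} by taking the matrix trace, since $H_j$ acts scalarly and the trace of a commutator vanishes.
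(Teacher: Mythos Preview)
Your commutator-and-limit outline matches the first stage of the paper's argument (Proposition~\ref{prop:prop}): one obtains $H_j(\Pi_j\mu\Pi_j)=[R_j,\Pi_j\mu\Pi_j]+\nu_j$ with a residual $\nu_j=(B_j\mu\Pi_j+\Pi_j\mu B_j^*){\bf 1}_{\Gamma_j}$ supported in~$\Gamma_j$. But the sentence ``a flow-box normal form lets one propagate the measure across the crossing and reduce these $B_j$-contributions to a term localized on~$\FF_j$'' is precisely where the real work lies, and you have not done it. What is needed is: (i) the observation that $\nu_j$ is absolutely continuous with respect to $\mu_j$ (not merely $\mu$)---this comes from writing the commutator term via $\psi_j^\eps=\op_\eps(\chi\Pi_j)\psi^\eps$; (ii) the proof that ${\bf 1}_{\Sigma_j(k;H_j)}\mu_j=0$ for every finite order~$k$, which is Lemma~\ref{L:invariance} in the transverse case and Lemma~\ref{L:invariance2} by induction on~$k$, each step feeding the transport equation and the absolute continuity back into a cutoff argument; and (iii) a \emph{transfinite} induction (Proposition~\ref{P:invariance}) over the ordinal-indexed family $F(\alpha;H_j;\Sigma_j)$ of Definition~\ref{def:F(alpha)} to squeeze $\supp\nu_j$ down to~$\FF_j$. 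None of this is a propagation-across-the-crossing argument; the mechanism is that $\mu_j$ cannot \emph{charge} $\Sigma_j$ at finite-order contact points, which then forces $\nu_j$ to retreat.

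Two further inaccuracies. Your description of $\FF_j$ as ``pieces of $H_j$-orbits entirely contained in~$\Sigma_j$'' is not its definition: $\FF_j$ is the residue of the transfinite peeling of Definitions~\ref{def:F(alpha)}--\ref{def:FF}, and it contains such orbit pieces but may be strictly larger. And the residual $\nu_j$ carries no $(\lambda_k-\lambda_j)^{-1}$ factors; those appear only in the auxiliary identity $\Pi_k\mu\Pi_j=0$ off~$\Gamma_j$ (equation~\eqref{orthogonality}), after which the algebra producing $R_j$ and $\nu_j$ proceeds without singular denominators.
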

\begin{rem} 
If $(x^*,\xi^*)\not\in\Sigma_j$, then \eqref{eq:propa} holds true on small enough $\omega$ without 
any assumption on $B_j$. 
\end{rem}
\begin{rem} Replacing the operator $\op_\eps(Q)$ by $D_t-\op_\eps(Q)$, one can show that any Wigner measure $\mu$ of
the time-dependent family $(\psi^\eps)_{\eps >0}$ satisfies
$$(\partial_t-H_j)\left(\Pi_j\mu\Pi_j\right)\ =\ [R_j,\Pi_j\mu\Pi_j]\, .$$
\end{rem}
\begin{rem}\label{rem-extension}
Theorem~\ref{theo:propagation} actually implies the same theorem for $(\psi^\eps)_{\eps >0}$ bounded in 
some $L^2_{-s}$ with $s>0$ and satisfying \eqref{eq:psi} with $L^2(\Omega,\C^N)$ replaced by $L^2_{loc}(\Omega,\C^N)$ (see Remark~\ref{rem-preuve-extension}). 
\end{rem}
\begin{rem}\label{remP} If $Q(x,\xi)=\frac{1}{2}|\xi|^2+M(x)$ with $M$ as in~\aref{formeM}, then $\op_\eps(Q)=P(\eps)$ and
$$B_j(x,\xi)=\frac{1}{2}\xi\cdot\nabla\Pi_j(x),\;\;R_j=\bigl[\xi\cdot\nabla \Pi_j(x),\Pi_j(x)\bigr].$$
\end{rem}
\ni Motivated by the comments in Remark~\ref{special}, we also analyse in Section~\ref{S:degenerated} a strongly degenerated situation which is excluded in Theorem~\ref{theo:propagation}. Finally, in the
Appendix, we give a microlocal normal form which should be of
interest for studying at any order in~$\eps$ a solution to a
partial differential equation close to a non-degenerated point in a
codimension~1 crossing.

\subsection{Comments on the results}

\ni The analysis of the propagation of Wigner measures in presence of eigenvalue crossing has been the subject of intensive works in the last ten years. The existing results are usually devoted to {\it generic}  situations where the Hamiltonian vector fields associated with the eigenvalues are transverse to the crossing set (see \cite{F04}--\cite{FG03}). Theorem~\ref{theo:propagation} covers more general situations where the Hamiltonian fields may be tangent to this set. To our knowledge, it is the first result on the propagation of Wigner measure in presence of eigenvalue crossing in a degenerated situation. We point out that G.~Hagedorn gave an important pioneer contribution to such propagation phenomena in \cite{ha}, where he presented a systematic study of the propagation of a
Gaussian wave packet through generic crossings of various codimension. 
We also want to quote the thesis of U.~Karlsson \cite{Kar} for the construction of a parametrix
in the presence of a smooth eigenvalue crossing and the work of M.~Brassart~\cite{Br} who studied codimension~1 eigenvalue crossings in a periodic situation.  \\
\ni Finally, let us mention that, for codimension~2 and~3 crossings, normal forms have been obtained by Y.~Colin de
Verdi\`ere in \cite{[CdV]} and \cite{[CdV2]}. These important results yield a very detailed description of the solution
close to some generic point in the crossing.  We give here a similar normal form for codimension~1 crossing in the Appendix under a non-degeneracy condition consisting in assuming the transversality of the classical trajectories
to the crossing set and the fact that the gap between the eigenvalues vanish at order $1$ on the crossing set.\\

\ni Concerning the resolvent estimates, there are many results for smooth, scalar Schr\"odinger operators (see \cite{Bu,Duy,Ge90,GeMa,J5,rt,VaZw00,w}). For less regular but still scalar potential, we quote \cite{cjk,cj}. 
In the matricial case, there are rather few results since propagation results like Theorem~\ref{theo:propagation} are difficult to obtain. Indeed, it is rather involved to control the influence of eigenvalues crossings. We quote \cite{FR,J1,J2,J6}. On the related question of existence of resonances for matrix Schr\"odinger operators, we mention \cite{fln,ne}, where only $2\times 2$ matrix operators are considered. Notice that the mentioned resolvent estimates are of great interest for 
semiclassical, molecular scattering theory (a theory for chemical reactions), since a matricial 
Schr\"odinger operator is a toy model for real molecules.

$ $

\noindent Let us say a few words about the proofs. The proof of Theorem~\ref{theo:resest} follows the strategy of \cite{J6}. The necessity of
the non trapping condition is proved following the method of
X.P.~Wang in \cite{w}, as adapted in \cite{FR} using Wigner measures (in
particular without the condition at infinity of \cite{J6}). %TD: reference ajoutee a un article de Burq ou il prouve ausis la condition necessaire avec des mesures, avec une preuve similaire a la notre
We refer to~\cite{Bu2} for a similar proof. The
sufficiency of the non trapping condition is, as in \cite{J5,J6}, obtained by
contradiction using the method of N.~Burq in \cite{Bu}, which is inspired by an argument of G.~Lebeau in~\cite{Le}, 
and also by the use of a rescaled Mourre estimate at space infinity derived in \cite{J6} (see also \cite{J5}).
The idea is to use Wigner measure to show that some particular sequence, which negates the resolvent bound,
tends to $0$ in $L^2_{\rm loc}$. One step of the proof is to show, using the long-range condition \eqref{hyp:lr}, that the Wigner measure is compactly supported. This is done in \cite{J6} and follows from the rescaled Mourre estimate. 
The new ingredient in the proof of Theorem \ref{theo:resest}
is the propagation of the Wigner measures at the crossing set,
which follows from Theorem \ref{theo:propagation}, under much
weaker assumptions than in \cite{J6}. The proof of this result
relies on an induction on the order of tangency of the flow,
reminiscent of an argument due to R.B.~Melrose and J.~Sj\"ostrand
in the context of propagation of singularities of boundary value
problems \cite{MS}. A standard induction, as in \cite{MS}, gives the propagation
around points of the crossing set where the flow has a finite order contact with the crossing set. Here we use a transfinite induction to get the propagation in
a larger set that may also contains some points with infinite order contact. Finally, the study of the degenerated situation in Section
\ref{S:degenerated} relies on the use of a two-scale Wigner
measure (see \cite{Mi,FG}) to describe more precisely the behaviour
of the Wigner measure at the crossing set.

$ $

\ni The organization of the paper is the following. The two steps
of the proof of Theorem~\ref{theo:resest}, which are by now
classical, are quickly sketched in Section 2, assuming that
Theorem~\ref{theo:propagation} is true. The latter is proved in
Section~\ref{sec:prop}. Then, in Section~4, we analyze the strongly 
degenerated situation mentioned above. Finally, in the Appendix, a 
normal form is given in a non-degenerate situation. 

\medskip

\ni\textbf{Acknowledgment.}
The second author wish to thank Patrick G\'erard for fruitful discussions on the subject. The first author wish to thank Thomas Chomette and Gilles Godefroy for clarifications on ordinal numbers.

\section{Non-trapping condition and resolvent estimate}
\label{sect:resolv-esti}

Here we assume Theorem~\ref{theo:propagation} true and we prove
Theorem~\ref{theo:resest}. We work under the assumptions of
Theorem~\ref{theo:resest}. We do not need to understand what is the
set ${\cal F}_j$. Its meaning is relevant for the proof of
Theorem~\ref{theo:propagation} only.

\subsection{The necessity of the non-trapping condition}

\noindent  Let us first focus on the necessity of the non-trapping
condition. We adapt the arguments in~\cite{FR}, which are inspired
by \cite{w}. Let $E\in I_0$ and $\theta\in{\cal C}_0^\infty(I_0)$
with $\theta =1$ near $E$. Let $I$ be the support of $\theta$. The
resolvent estimate \eqref{resest2} implies that, for $s>1/2$ and
uniformly w.r.t. $\eps\in]0,\eps_0[$, the function~$\langle
x\rangle ^{-s}$ is $P(\eps)$-smooth on $I$ (see Theorem~XIII 25 in
\cite{[RS4]}). Therefore there exists a constant~$C_0$ such that
for any $\eps\in]0,\eps_0[$ and any $\psi\in L^2$
\begin{equation}\label{Psmoothness}
\int_\R\,\left\|\langle x\rangle ^{-s}
\theta\left(P(\eps)\right)\e^{-i\frac{t}{\eps}P(\eps)}\psi\right\|^2_{L^2} \,\d t\leq C_0\left\|\psi\right\|_{L^2}^2.
\end{equation}
\ni We are going to prove that if $(x_j(t),\xi_j(t))_{t\in\R}$ is a
classical trajectory of $H_j$ of energy $E$, that is contained in
$\lambda _j^{-1}(E)$,
then
\begin{equation}\label{contradiction}
\forall T>0,\;\;\int_{-T}^{+T} \,\langle x_j(t)\rangle ^{-s}\,\d t\leq C_0.
\end{equation}

\ni This property implies that $E$ is non-trapping (cf. \cite{w}). 
%Indeed, we argue by
%contradiction: let us suppose that the classical trajectory $|x_j(t)|$ does not
%tend to $+\infty$ as $t$ goes to $\pm\infty$. Then, there exists a sequence
%$(t_n)$ going to $\pm\infty$ and $M>0$ such that, for all $n$,
%$|x_j(t_n)|<M$. Since the bicharacteristic
%$(x_j(t),\xi_j(t))_{t\in\R}$ is contained in $\lambda _j^{-1}(E)$, the
%speed $\xi _j(t)=\dot{x}_j(t)$ is bounded. Thus there exists $\eta >0$
%such that, for all $n$ and all $t\in (t_n-\eta ,t_n+\eta )$, $|x_j(t)|
%\leq 2M$, which is in contradiction with \aref{contradiction}. \\
Let us prove~\aref{contradiction}.
Consider a trajectory $(x_j(t),\xi_j(t))_{t\in\R}$ of $H_j$ of energy $E$ and let $(\psi^\eps_0)_{\eps >0}$ be a by one bounded family in $L^2(\R^2,\C^N)$ having only one Wigner measure $\mu_0$ such that
$$\mu_0=c_0\,\delta(x-x_j(0))\otimes\delta(\xi-\xi_j(0))\,\Pi_j(x).$$
One can actually choose coherent states microlocalized at
$(x_j(0),\xi_j(0))$, for instance. We consider $\psi^\eps (t)=\e^{-i\frac{t}{\eps} P(\eps)}\psi^\eps_0$. The family $(\psi^\eps)_{\eps >0}$ satisfies
$$\frac{\eps}{i}\partial_t\psi^\eps+P(\eps) \psi^\eps=0,\;\;\psi^\eps_{|t=0}=\psi^\eps_0.$$
Let $\mu_t$ be a  Wigner measure of $(\psi^\eps(t))_{\eps >0}$. Let $k\neq j$. 
By Theorem~\ref{theo:propagation}, the measure $\Pi_k\mu _t\Pi_k$ satisfies a linear differential equation with 
initial data $\Pi_k\mu _0\Pi_k=0$, thus it is zero. Since $\Pi_k\mu _t\Pi_j$ is absolutely continuous w.r.t. 
$\Pi_k\mu _t\Pi_k$ and $\Pi_j\mu _t\Pi_j$, it is also zero. Thus $\mu_t=\Pi_j\mu _t\Pi_j$. 
Theorem~\ref{theo:propagation} yields
\begin{equation}\label{calculmu}
{\rm tr}\, \mu(t,x,\tau,\xi)=c_0\,\delta(x-x_j(t))\otimes\delta(\xi-\xi_j(t))\otimes
\delta\left(\tau+\lambda _j(x,\xi )\right)\otimes dt.
\end{equation}
\ni Let $p(x,\xi )=\frac{|\xi|^2}{2}+M(x)$ be the symbol of
$P(\eps)$. Take $T>0$ and a non-negative, smooth, compactly
supported, scalar function $(t,x,\xi)\mapsto a(t,x,\xi)$ such that
\begin{equation}\label{condition-sur-a}
\forall t\in[-T,T],\;\;a\bigl(t,x_j(t),\xi_j(t)\bigr)=1\hspace{.3cm}\mbox{and}\hspace{.3cm}(1-\theta (p))a(t,\cdot ,\cdot )=0\, .
\end{equation}
%

%We have
%\begin{equation}\label{decomposition}\nonumber
%\psi^\eps(t)  = \theta\left(P(\eps)\right) \psi^\eps(t)
%+\left({\rm Id}-\theta\left(P(\eps)\right)\right) \psi^\eps(t).
%\end{equation}

\ni By~\aref{Psmoothness} and the fact that $\|\psi^\eps_0\|_{L^2}\leq 1$, for all $\eps$,
\begin{equation}\label{termborne}
\int_{-T}^{+T}\Bigl(\op_\eps(a(t,x ,\xi ))
\langle x\rangle ^{-s}\theta\left(P(\eps)\right) \psi^\eps(t)
\,|\,\langle x\rangle ^{-s} \theta\left(P(\eps)\right)
 \psi^\eps(t)\Bigr)_{L^2}\,\d t\leq C_0.
 \end{equation}
We first observe that, uniformly w.r.t. $t\in[-T;T]$,
$$\langle x\rangle ^{-s}\op_\eps\left(a(t,x ,\xi )\right)\langle x\rangle ^{-s}=\op_\eps\left(\langle x\rangle ^{-2s}a(t,x ,\xi )\right)+o(1)\;\;{\rm
in}\;\;{\cal L}\left(L^2(\R^d)\right).$$
Then, the functional calculus and \eqref{condition-sur-a} give, in $L^2(\R^d)$,
$$\op_\eps\left(\langle x\rangle ^{-2s}a(t,x ,\xi )\right)
\Bigl(1-\theta(P(\eps))\Bigr)=\op_\eps\Bigl(\langle x\rangle ^{-2s}a(t,x ,\xi )\bigl(1-\theta (p(x ,\xi ))\bigr)\Bigr)+o(1)=o(1)\, $$
where by matricial functional calculus for fixed $(x,\xi )$,
$\theta(p(x,\xi ))=\sum_{1\leq j\leq m}\theta(\lambda _j(x,\xi
))\Pi_j(x)$.
Therefore, writing $\psi^\eps(t)  = \theta\left(P(\eps)\right) \psi^\eps(t)
+({\rm Id}-\theta(P(\eps))) \psi^\eps(t)$, 
$$\displaylines{\int_{-T}^{+T}\Bigl(\op_\eps(a(t,x ,\xi ))
\langle x\rangle ^{-s}\psi^\eps(t)
\,|\,\langle x\rangle ^{-s}
 \psi^\eps(t)\Bigr)_{L^2}\,\d t\hfill\cr\hfill
 =\int_{-T}^{+T}\Bigl(\op_\eps(a(t,x ,\xi ))
\langle x\rangle ^{-s}\theta\left(P(\eps)\right) \psi^\eps(t)
\,|\,\langle x\rangle ^{-s} \theta\left(P(\eps)\right)
 \psi^\eps(t)\Bigr)_{L^2}\,\d t+o(1).\cr}$$

\ni Using~\aref{termborne} and passing
to the limit $\eps\rightarrow 0 $, we get, since $a$ is scalar and satisfies \eqref{condition-sur-a}, 
\begin{eqnarray*}
\int_{-T}^{+T}\int a(t,x,\xi)\langle x\rangle ^{-2s} {\rm tr}\, \d \mu_t(x,\xi)\d t &=&
\int_{-T}^{+T} a(t,x_j(t),\xi_j(t))\langle x_j(t)\rangle ^{-2s} \d t\\
&=&\int_{-T}^{+T} \langle x_j(t)\rangle ^{-2s} \d t \leq C_0\, ,
\end{eqnarray*}
whence~\aref{contradiction}.

\subsection{The sufficiency  of the non-trapping condition}

\label{SS:sufficient}

Now we assume that the non trapping condition is fulfilled on some
open interval $I_0$ included in~$(\|M_\infty\|_\infty ;+\infty)$
and we prove the resolvent estimate \eqref{resest2} by
contradiction. Suppose that, for some interval $I\Subset I_0$,
some $s>1/2$, and some $\epsilon _0>0$, \eqref{resest2} is false.
Then it is shown in \cite{J6} that the following situation occurs:
there exist a sequence~$(\epsilon _n)\in (0;\epsilon _0)^\N$
tending to~$0$, a sequence~$(f _n)$ of $H^2(\R^d)$-functions, a
sequence~$(z_n)\in \C^\N$ such that~${\cal R}e\, z_n\to E\in I$,
$({\cal I}m\, z_n)/\epsilon _n\to 0$, $(f _n)$ is bounded in
$L^2_{-s}(\R^d)$, has a unique Wigner measure $\mu$, and 
the $L^2$-norm $\|\langle x\rangle ^s(P(\eps _n)-z_n)f_n\|_{L^2}$ is 
a $o(\eps _n)$. 
Furthermore, the long range condition \eqref{hyp:lr} 
implies the existence of some $R>0$ such that 
\begin{equation}\label{controle-infini}
\lim_{n},{\infty}\int _{|x|\geq R}\langle x\rangle ^{-2s}|f_n(x)|^2\, dx\ =\ 0. 
\end{equation}
This implies in particular that $\mu$ is nonzero and supported in the compact set
\begin{equation}\label{localisation-mesure}
\biggl(\bigcup _{1\leq j\leq m} \lambda_j^{-1}(E)\biggr)\cap \{(x,\xi ); |x|\leq R\}. 
\end{equation}
Taking $\tau\in {\cal C}^\infty_0(\R^d)$ such that $\tau =1$ near the set $\{x\in\R^d;|x|\leq R\}$, 
one can show, by direct computations, that the sequence $(g_n)$, defined by $g_n=\tau f_n$, 
is bounded in $L^2(\R^d)$, has $\mu$ as unique Wigner measure, and satisfies 
$\|(P(\eps _n)-{\cal R}e\, z_n)g_n\|=o(\eps _n)$ (as in \cite{cjk}). 
In view of Remark~\ref{remP}, we can apply Theorem~\ref{theo:propagation} to $(g_n)$. 
Since, for all $j$, the scalar measure ${\rm tr}\Pi _j\mu\Pi _j$ is compactly supported and
invariant under the flow of $H_j$, the non-trapping condition for
$H_j$ imposes that ${\rm tr}\Pi _j\mu\Pi _j=0$. Since the diagonal terms of 
the matricial measure $\Pi _j\mu\Pi _j$ (recall that $\Pi _j$ may have rank $>1$) are 
non negative, they all vanish. Since the off-diagonal 
terms of $\Pi _j\mu\Pi _j$ are absolutely continuous w.r.t. the 
diagonal terms, they also vanish. Thus, $\Pi _j\mu\Pi _j=0$ for all $j$. 
Since the measures $\Pi _k\mu\Pi _j$ with $k\neq j$ are absolutely continuous w.r.t. 
$\Pi _k\mu\Pi _k$ and $\Pi _j\mu\Pi _j$, they all vanish and $\mu=0$, yielding the desired contradiction.

\begin{rem}\label{re-special} 
Under the previously mentioned special condition at the crossing, it
is shown in \cite{J6} that, for all $j$, the matricial measure $\Pi _j\mu\Pi _j$ is
actually invariant under the flow of $H_j$. Here we use a weaker information namely the propagation 
result in Theorem~\ref{theo:propagation}. \\
The derivation of the properties of $(g_n)$ is quite immediate here. In \cite{cjk} it is 
more complicated due to the presence of singularities. 
\end{rem}

%%%%%%%%%%%%%%%%%%%%%%%%%%%%%%%%%%%%%%%%%%%%%%%%%%%%%%%%%%%%%%%%%%%%%%%%%%%%%%%

%%%%%%%%%%%%%%%%%%%%%%%%%%%%%%%%%%%%%%%%%%%%%%

\section{Propagation of Wigner measures}\label{sec:prop}

In this section, we prove Theorem \ref{theo:propagation}. First we explain in the next remark why 
the extension of Theorem \ref{theo:propagation} announced in Remark~\ref{rem-extension} holds true. 
\begin{rem}\label{rem-preuve-extension}
Assume that $(\psi^\eps)_{\eps >0}$ bounded in some $L^2_{-s}$ with $s>0$ and satisfy \eqref{eq:psi} with $L^2(\Omega,\C^N)$ replaced by $L^2_{loc}(\Omega,\C^N)$. Given a bounded open subset $\omega$ of $\Omega$ with $\overline{\omega}\subset\Omega$, take functions $\tau _1,\tau_2\in {\cal C}^\infty_0(\Omega )$ such that $\tau _1\tau_2=\tau_2$ and $\tau_2=1$ near 
$\omega$. Let $Q_1$ be defined by $\tau _2op_\eps (Q)=op_\eps (Q_1)$ and $\phi ^\eps=\tau _1\psi ^\eps$. 
Then we can apply Theorem \ref{theo:propagation} for $(\phi ^\eps)_{\eps >0}$ and $Q_1$. This yields 
\eqref{eq:propa} for any Wigner measure of $(\psi^\eps)_{\eps >0}$ on this $\omega$. 
\end{rem}

\ni Without loss of generality, we may assume that $(\psi^\eps)_\eps$ has only one Wigner measure $\mu$. It is a
straightforward consequence of the functional calculus and
equation \aref{eq:psi} that
$$ \supp \mu \subset \{\det(Q(x,\xi))=0\}=\bigcup_{1\leq j\leq N}^{}\{\lambda_j(x,\xi)=0\}.$$
Let us consider $\chi\in {\cal C}_0^\infty(\Omega\times\R^d;\R)$ such
that $\chi=1$ on $\overline\omega$. We set  $\psi_{j}^{\eps}:=
\op_\eps(\chi \Pi_{j})\psi^{\eps}$. Notice that $\Pi_{j}$ may behave badly at infinity and may not belong 
to a good symbolic class. However $\chi \Pi_{j}\in {\cal C}^\infty _0(\R^d)$ and the family $(\psi_{j}^{\eps})_{\eps >0}$ is bounded in $L^2$. The Wigner measure of
$(\psi^\eps_j)_\eps$ is~$\mu_j=\chi^2\Pi_j\mu\Pi_j$.
Denoting by $\mu_{ij}$ the joint measure of $(\psi_j^\eps)_\eps$ and
$(\psi_i^\eps)_\eps$, for a matrix-valued test function $a$ supported in $\omega$,  
\begin{eqnarray}
& \lim_{\eps},0 \left(\op_\eps(a) \psi_{j}^\eps,\psi_{j}^\eps\right)={\rm tr}\left(\int
a(x,\xi)d\mu_j(x,\xi)\right),&\label{def-mu_j}\\
& \lim_{\eps},0 \left(\op_\eps(a) \psi^{\eps}_{i},\psi^{\eps}_{j}\right)={\rm tr}\left(\int a(x,\xi)d\mu_{ij}(x,\xi)\right)&\nonumber
\end{eqnarray}
The first step of the proof consists in proving the following
proposition which is a consequence of the analysis performed in
Section~6 of \cite{GMMP} and in \cite{GMMP2}. For the convenience
of the reader, we give a proof below in order to emphasize the
specific features due to the crossing. We recall that, for all $j$, the set
$\Gamma _j$ is defined in~\eqref{1.8'} and is included in some codimension one 
submanifold $\Sigma _j$. 
\begin{prop}\label{prop:prop}
For all $j\in \{1,\cdots ,N\}$, there exists a measure $\nu_j$ absolutely continuous with respect to $\mu_j$ such that, as distributions on~$\omega$,
\begin{equation}\label{eq:muj} H_{j}\mu_{j}=\left[R_j,\mu_j\right]+\nu_j,\;\;
{\rm Supp}\, \nu_j\subset\Gamma _j\cap\omega
\end{equation}
where $R_j$ is defined in~\aref{def:Rj}.
Besides, in~$\omega$ and  outside $\Gamma _j$, $\mu\Pi_j=\Pi_j\mu=\mu_j.$
\end{prop}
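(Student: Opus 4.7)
My plan is to test the equation $\op_\eps(Q)\psi^\eps=o(\eps)$ against symbols that commute pointwise with $Q$, so that the principal commutator vanishes and the subprincipal Moyal term controls the limit. Natural choices are $A=a\chi^2\Pi_j$ for scalar $a$ supported in $\omega$ and, more generally, $A=\Pi_j B\Pi_j$ for matrix-valued $B$: since $Q\Pi_j=\lambda_j\Pi_j=\Pi_j Q$, these satisfy $[Q,A]=0$ pointwise. Self-adjointness of $\op_\eps(Q)$ combined with the equation gives $\tfrac{1}{i\eps}([\op_\eps(Q),\op_\eps(A)]\psi^\eps,\psi^\eps)=o(1)$, and the Weyl--Moyal expansion identifies this quantity with $-\tfrac{1}{2}{\rm tr}\int(\{Q,A\}-\{A,Q\})\d\mu+o(1)$. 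Passing to the limit,
\[
{\rm tr}\int\bigl(\{Q,A\}-\{A,Q\}\bigr)\d\mu=0.
\]

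For $A=a\chi^2\Pi_j$, the scalar antisymmetry $\{a\chi^2,Q\}=-\{Q,a\chi^2\}$ and Leibniz yield
\[
\{Q,A\}-\{A,Q\}=\{Q,a\chi^2\}\Pi_j+\Pi_j\{Q,a\chi^2\}+a\chi^2\bigl(\{Q,\Pi_j\}-\{\Pi_j,Q\}\bigr).
\]
From $\Pi_j\Pi_k=\delta_{jk}\Pi_j$ one derives the key identities $\Pi_j(\partial\Pi_j)\Pi_j=0$, $\Pi_j(\partial\Pi_k)\Pi_j=0$ for $k\neq j$ and $(\partial\Pi_j)\Pi_k+\Pi_j(\partial\Pi_k)=0$ for $j\neq k$; hence $\Pi_j(\partial Q)\Pi_j=\partial\lambda_j\,\Pi_j$ and $\Pi_j\{Q,a\chi^2\}\Pi_j=H_j(a\chi^2)\Pi_j$. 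Projecting onto $\Pi_j\cdots\Pi_j$ and using cyclicity of ${\rm tr}$ (which turns $\Pi_j X\Pi_j\,\d\mu$ into $X\,\d\mu_j$), the first two terms assemble into $\langle H_j\mu_j,a\chi^2\rangle$. Expanding $\partial Q=\sum_k(\partial\lambda_k\Pi_k+\lambda_k\partial\Pi_k)$ in the third term and sandwiching by $\Pi_j$, the $\partial\lambda_j$-contribution regroups as $[\{\lambda_j,\Pi_j\},\Pi_j]$, while the quadratic $\partial\Pi_k$-contributions, after rewriting off-diagonal blocks via $\Pi_j\partial\Pi_k=-\partial\Pi_j\Pi_k$, produce $\tfrac{1}{2}\sum_k(\lambda_k-\lambda_j)\Pi_j\{\Pi_k,\Pi_k\}\Pi_j$; together these match $R_j$ as defined in \aref{def:Rj} and yield $[R_j,\mu_j]$ in the limit. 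Varying $A=\Pi_j B\Pi_j$ over all matrix $B$ promotes the scalar identity to the full matricial equation.

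The residual contributions, arising from off-diagonal sandwiches $\Pi_j(\cdots)(\mathrm{Id}-\Pi_j)$ and their transpose, all carry a prefactor $(\lambda_k-\lambda_j)$ with $k\neq j$ and hence vanish on $\omega\setminus\Gamma_j$; collected together they define $\nu_j$, supported in $\Gamma_j\cap\omega$. Absolute continuity of $\nu_j$ with respect to $\mu_j$ follows from Cauchy--Schwarz for hermitian matrix-valued measures: the off-diagonal parts of $\mu$ are dominated by the diagonal ones, and each residual term is the product of a smooth bounded coefficient with such a piece. The second assertion, $\mu\Pi_j=\Pi_j\mu=\mu_j$ on $\omega\setminus\Gamma_j$, is a semiclassical ellipticity argument: at $(x^\ast,\xi^\ast)\in\omega\setminus\Gamma_j$ with $\lambda_j(x^\ast,\xi^\ast)=0$, continuity gives $\lambda_k(x^\ast,\xi^\ast)\neq 0$ for $k\neq j$, so $Q$ restricted to $(\mathrm{Id}-\Pi_j)\C^N$ is invertible near this point; a microlocal parametrix applied to $\op_\eps(Q)\psi^\eps=o(\eps)$ yields $\op_\eps(\mathrm{Id}-\Pi_j)\psi^\eps=o(1)$ there, whence $(\mathrm{Id}-\Pi_j)\mu=\mu(\mathrm{Id}-\Pi_j)=0$ locally.

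The main technical obstacle is the algebraic bookkeeping in the second paragraph: matching the sandwiched bracket expansion precisely to $R_j$ as in \aref{def:Rj} requires carefully separating diagonal ($\Pi_j\cdots\Pi_j$) from off-diagonal blocks, exploiting the identities derived from $\Pi_j\Pi_k=\delta_{jk}\Pi_j$, and combining cross terms through cyclicity of ${\rm tr}$ against $\mu$. The quadratic summand $\Pi_j\{\Pi_k,\Pi_k\}\Pi_j$, which has no analogue in the scalar case, emerges only after one combines the $\lambda_k\partial\Pi_k$-contributions of $\partial Q$ through these projector identities; it is the most delicate point of the reduction.
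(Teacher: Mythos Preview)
Your route is genuinely different from the paper's and, with care, can be made to work; but the writeup hides a real gap. The paper first proves the commutation $\Pi_k\mu\Pi_j=0$ off $\Gamma_j$ directly, via the algebraic identity $\Pi_k a\Pi_j=[Q,(\lambda_k-\lambda_j)^{-1}\Pi_k a\Pi_j]$ and the equation. It then derives the transport law by \emph{localizing first}: with $\psi_j^\eps=\op_\eps(\chi\Pi_j)\psi^\eps$ one has $\op_\eps(\tilde\chi\lambda_j)\psi_j^\eps=\tfrac{\eps}{i}\op_\eps(\tilde B_j)\psi^\eps+o(\eps)$, and the scalar commutator $[\op_\eps(a),\op_\eps(\tilde\chi\lambda_j)]$ acting on $\psi_j^\eps$ yields in one stroke $H_j\mu_j=\tilde\nu_j$ with $\tilde\nu_j=B_j\mu\Pi_j+\Pi_j\mu B_j^\ast$. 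Because every pairing involves $\psi_j^\eps$ on at least one side, $\tilde\nu_j$ is manifestly built from $\mu\Pi_j$ and $\Pi_j\mu$ only, and absolute continuity with respect to $\mu_j$ follows immediately from Cauchy--Schwarz. The identification of $R_j$ is then a computation on $\tilde\nu_j\mathbf{1}_{\Gamma_j^c}$ using the already established commutation.

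Your approach, commuting $\op_\eps(Q)$ with $\op_\eps(\Pi_jB\Pi_j)$ on the full $\psi^\eps$, does not enjoy this automatic restriction. You assert that the residual contributions ``arise from off-diagonal sandwiches $\Pi_j(\cdots)(\mathrm{Id}-\Pi_j)$ and their transpose'' and ``all carry a prefactor $(\lambda_k-\lambda_j)$''. This is incomplete: the term $f(\{Q,\Pi_j\}-\{\Pi_j,Q\})$ also has nonzero blocks $\Pi_l(\cdots)\Pi_k$ with \emph{both} $l,k\neq j$; a short calculation using $\Pi_l\partial Q\,\Pi_j=(\lambda_l-\lambda_j)\partial\Pi_l\,\Pi_j$ shows these blocks carry the factor $(\lambda_l-\lambda_k)$, not $(\lambda_\cdot-\lambda_j)$. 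Paired with $\Pi_k\mu\Pi_l$, such terms are neither a priori supported in $\Gamma_j$ nor absolutely continuous with respect to $\mu_j$, so your Cauchy--Schwarz argument for absolute continuity of $\nu_j$ does not apply to them as stated. They do in fact vanish --- because $\Pi_k\mu\Pi_l$ is supported in $\{\lambda_k=0\}\cap\{\lambda_l=0\}\subset\{\lambda_l=\lambda_k\}$, on which the prefactor $(\lambda_l-\lambda_k)$ is zero --- but this step is missing from your argument and is precisely the kind of complication the paper's localized approach avoids. A minor related point: your ellipticity argument for $\mu\Pi_j=\Pi_j\mu$ is written only at points where $\lambda_j=0$; the extension to all of $\omega\setminus\Gamma_j$ again needs a Cauchy--Schwarz step (or the paper's commutator trick), since at points with $\lambda_j\neq 0$ it is $Q$ on $\Pi_j\C^N$, not on $(\mathrm{Id}-\Pi_j)\C^N$, that is elliptic.
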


\noindent In the following subsection, we prove
Proposition~\ref{prop:prop}. Then we study a class of matricial equations containing 
\aref{eq:muj} in a second subsection. Finally, the third subsection is devoted
to the conclusion of the proof of Theorem~\ref{theo:propagation}.

\subsection{Proof of Proposition~\ref{prop:prop}}

We work microlocally in $\omega$.
Let us first prove the commutation relation of $\mu$ with $\Pi_j$ outside $\Gamma _j$. This is equivalent
to the fact that for $k\not=j$, $\Pi_k\mu\Pi_j=0$ on $\omega\setminus\Gamma _j$, i.e. to
\begin{equation}\label{orthogonality}
\forall j\in\{1,\cdots, N\},\;\forall k\not=j,\;\; \forall
a\in{\cal
C}_0^\infty(\omega\setminus\Sigma_j,\C^{N,N}),\;\;
\left(\op_\eps(\Pi_k a \Pi_j)\psi^\eps,\psi^\eps\right)\td_\eps,0
0,
\end{equation}
\ni Since $\lambda_j\not=\lambda_k$ on the support of $a$, $\frac{1}{\lambda_k-\lambda_j}\Pi_k a\Pi_j\in {\cal C}_0^\infty$ and 
$$\Pi_k a\Pi_j=\left[Q,\frac{1}{\lambda_k-\lambda_j}\Pi_k a\Pi_j
\right]\, .$$
Therefore, \aref{orthogonality} follows from 
$$\displaylines{\big(\op_\eps(\Pi_k a \Pi_j)\psi^\eps,\psi^\eps\big)=
\left(\left[\op_\eps(Q),\op_\eps\left(\frac{1}{\lambda_k-\lambda_j}\Pi_k a\Pi_j\right)\right]\psi^\eps,\psi^\eps\right)+O(\eps)=O(\eps)\, .\cr}
$$

$ $

\noindent We focus now on the proof of the transport equation \eqref{eq:propa}. Since $\lambda _j$ may 
also not belong to a good symbol class, we introduce a function $\tilde\chi\in {\cal C}_0^\infty(\Omega\times\R^d;\R)$ such that $\chi=\tilde\chi \chi$. By symbolic calculus, 
\begin{eqnarray*}
 \op_\eps (\tilde\chi \lambda _j)\psi_{j}^{\eps}=\op_\eps (\tilde\chi \lambda _j)
\op_\eps (\chi\Pi_j)\psi^\eps&=& \op_\eps (\tilde\chi \lambda _j\chi\Pi_j)\psi^\eps 
+\eps \op_\eps \bigl(\{\tilde\chi \lambda _j,\chi\Pi_j\}/(2i)\bigr)\psi^\eps +o(\eps),\\
o(\eps)=\op_\eps (\chi\Pi_j)\op_\eps (Q)\psi^\eps &=&\op_\eps (\chi\Pi_jQ)\psi^\eps +\eps \op_\eps \bigl(\{\chi\Pi_j,Q\}/(2i)\bigr)\psi^\eps +o(\eps)
\end{eqnarray*}
in $L^2$. Since $\chi\Pi_jQ=\tilde\chi \lambda _j\chi\Pi_j$, this yields 
$$\op_\eps(\tilde\chi \lambda_{j})\psi^{\eps}_j=\frac{\eps}{i}\,\op_\eps(\tilde B_j)\psi^{\eps}+o(\eps),$$
where $$\tilde B_j=\frac{1}{2}\left(-\{\chi \Pi_j, Q\}+\{\tilde\chi\lambda_j,\chi\Pi_j\}\right)=-\frac{1}{2}\{\chi\Pi_j,Q+\tilde\chi\lambda_j{\rm Id}\}.$$
Thus, for $a\in C^{\infty}_0(\omega,\C^{N,N})$,
\begin{eqnarray}\nonumber
\frac{i}{\eps}\left(\left[\op_\eps(a),\op_\eps(\tilde\chi\lambda_j)\right]\psi^{\eps}_{j},
\psi^{\eps}_{j}\right) & = & \frac{i}{\eps}
\left(\op_\eps(a)\op_\eps(\tilde\chi\lambda_{j})\psi^{\eps}_{j},\psi^{\eps}_{j}\right)-\frac{i}{\eps}\left(\op_\eps(a)\psi^{\eps}_{j},
\op_\eps(\tilde\chi\lambda_j)\psi^{\eps}_{j}\right)\\
\label{eq:Hlambda1}
 & = & \left(\op_\eps(a)\op_\eps(\tilde B_j) \psi^{\eps},\psi^{\eps}_{j}\right)+\left(\op_\eps(a) \psi^{\eps}_{j},\op_\eps(\tilde B_j) \psi^{\eps}\right)+o(1)\\
 \label{eq:Hlambda}
 & = & \left(\op_\eps\left(\Pi_ja\tilde B_j+\tilde B_j^*a\Pi_j\right)\psi^\eps,\psi^\eps\right)+o(1),
\end{eqnarray}
where
$\tilde B_j^*=\frac{1}{2}\left(\{Q,\chi \Pi_j\}+\{\tilde\chi\lambda_j,\chi\Pi_j\}\right)=\frac{1}{2}\{Q+\tilde\chi\lambda_j{\rm Id} ,\tilde\chi\Pi_j\}$.
On one hand,
\begin{equation}
\label{onehand}
\lim_{\eps},0
\frac{i}{\eps}\left(\left[\op_\eps(a),\op_\eps(\tilde\chi\lambda_{j})
\right]\psi^{\eps}_{j},\psi^{\eps}_{j}\right)=- {\rm tr}
\int (H_{j} a)(x,\xi) d\mu_j(x,\xi)
={\rm tr}\int a(x,\xi) d(H_{j}\mu_j)(x,\xi),
\end{equation}
since $\tilde\chi=1$ on the support of $a$, and, on the other hand,
\begin{multline}
\label{otherhand}
 \lim_{\eps},0
  \left(\op_\eps\left(\Pi_ja\tilde B_j+\tilde B_j^*a\Pi_j\right)\psi^\eps,\psi^\eps\right)=
{\rm tr}\left(\int(
\Pi_ja\tilde B_j+\tilde B_j^*a\Pi_j) d\mu(x,\xi)\right)
\\
={\rm tr}\left(\int a( \tilde B_j d\mu(x,\xi) \Pi_j+\Pi_j d\mu(x,\xi) \tilde B_j^*)\right)={\rm tr}\int a \,d\tilde\nu_j(x,\xi),
\end{multline}
where $\tilde\nu_j=
\tilde B_j\mu\Pi_j+\Pi_j \mu \tilde B_j^*$. In view of $\chi=1$ on $\overline\omega$, we have in $\omega$,
\begin{equation}\label{tilde-nu-j}
\tilde\nu_j=
B_j\mu\Pi_j+\Pi_j \mu B_j^*
\end{equation}
\ni with $B_j$ defined in~\aref{def:Bj}. Let us study now $\tilde\nu_j$.
This measure describes the limit of the term \aref{eq:Hlambda1}, therefore it is absolutely continuous with 
respect to the Wigner measure of $(\psi^\eps_j)$, namely $\mu_j$. Let us focus on $\tilde\nu_j\,{\bf 1}_{\Gamma _j^c}$, where ${\bf 1}_{\Gamma _j^c}$ denotes the characteristic function of the complement $\Gamma _j^c$ of $\Gamma _j$.
The computation below is done in \cite{GMMP} and \cite{GMMP2}. We
write it for the sake of completeness. In view of the commutation
relation $\mu\Pi_j=\Pi_j\mu$ outside $\Gamma _j$, we  have
$$\tilde\nu_j\,{\bf 1}_{\Gamma _j^c}=\left(B_j\Pi_j\mu+\mu\Pi_j B_j^*\right)\,{\bf 1}_{\Gamma _j^c}.$$
Besides, writing $Q=\sum_{1\leq k\leq m}\lambda_k\Pi_k$, \aref{def:Bj} implies that 
$$\displaylines {2B_j=\{\lambda_j,\Pi_j\}-\sum_{1\leq k\leq m}\Bigl(\lambda_k\{\Pi_j,\Pi_k\}+\{\Pi_j,\lambda_k\}\Pi_k\Bigr),\cr
2B_j^*= \{\lambda_j,\Pi_j\}+
\sum_{1\leq k\leq m}\Bigl(\lambda_k\{\Pi_k,\Pi_j\}+\Pi_k\{\lambda_k,\Pi_j\}\Bigr).\cr}$$
Since $\Pi_j^2=\Pi_j$, $\Pi_j\{\lambda_j,\Pi_j^2-\Pi_j\}\Pi_j=0$ and $\Pi_j\{\lambda_j,\Pi_j\}\Pi_j=0$. Therefore
\begin{equation}\label{pice1}
\Bigl(\{\lambda_j,\Pi_j\}\Pi_j\mu +\mu\Pi_j\{\lambda_j,\Pi_j\}\Bigr)\, {\bf 1}_{\Gamma _j^c}=\Bigl[\bigl[\{\lambda_j,\Pi_j\},\Pi_j\bigr] \,,\,\mu_j\Bigr]\, {\bf 1}_{\Gamma _j^c}.
\end{equation}
Using the general fact that $A\{B,C\}-\{A,B\}C=\{AB,C\}-\{A,BC\}$, we observe that, $k\not=j$,
\begin{eqnarray}
\{\Pi_k,\Pi_j\}&=&\{\Pi_k^2,\Pi_j\}=\Pi_k\{\Pi_k,\Pi_j\}-\{\Pi_k,\Pi_k\}\Pi_j,\label{crochet-kj}\\
-\{\Pi_j,\Pi_k\}&=&-\{\Pi_j,\Pi_k^2\}=\Pi_j\{\Pi_k,\Pi_k\}-\{\Pi_j,\Pi_k\}\Pi_k,\nonumber
\end{eqnarray}
yielding 
\begin{equation}\label{pice2}
-\lambda_k\Bigl(\{\Pi_j,\Pi_k\}\Pi_j\mu-\mu\Pi_j\{\Pi_k,\Pi_j\}\Bigr)\, {\bf 1}_{\Gamma _j^c}=\lambda_k\Bigl[\Pi_j\{\Pi_k,\Pi_k\}\Pi_j,\mu_j\Bigr]\, {\bf 1}_{\Gamma _j^c}.
\end{equation}
The previous bracket identity with $A=B=C=\Pi_j$ implies that $\Pi_j\{\Pi_j,\Pi_j\}=\{\Pi_j,\Pi_j\}\Pi_j$, whence
\begin{equation}\label{pi_j-bracket}
 \Pi_j\{\Pi_j,\Pi_j\}=\Pi_j^2\{\Pi_j,\Pi_j\}=\Pi_j\{\Pi_j,\Pi_j\}\Pi_j,\;\;
\{\Pi_j,\Pi_j\}\Pi_j=\{\Pi_j,\Pi_j\}\Pi_j^2=\Pi_j\{\Pi_j,\Pi_j\}\Pi_j.
\end{equation}
Replacing both $\Pi_j$'s inside the bracket by ${\rm Id}-\sum_{k\not=j}\Pi_k$ and using \eqref{crochet-kj}
with $k\neq j$ replaced by $k\neq l$, we obtain
\begin{equation}\label{pi_j-bracket-pi_j}
\Pi_j\{\Pi_j,\Pi_j\}\Pi_j=\sum_{k\not=j}\Pi_j\{\Pi_k,\Pi_k\}\Pi_j.
\end{equation}
By \eqref{pi_j-bracket} and \eqref{pi_j-bracket-pi_j}, 
\begin{equation}\label{pice3}
-\lambda_j\Bigl(\{\Pi_j,\Pi_j\}\Pi_j\mu-\mu \Pi_j\{\Pi_j,\Pi_j\}\Bigr)\, {\bf 1}_{\Gamma _j^c}=
-\lambda_j\sum_{k\not=j}\Bigl[\Pi_j\{\Pi_k,\Pi_k\}\Pi_j,\mu_j\Bigr]\, {\bf 1}_{\Gamma _j^c}.
\end{equation}
Collecting the different pieces \aref{pice1}, \aref{pice2} and \aref{pice3}, we obtain
$$2\tilde\nu_j\,{\bf 1}_{\Gamma _j^c} =\biggl[2\bigl[\{\lambda_j,\Pi_j\},\Pi_j\bigr] +\sum_k(\lambda_k-\lambda_j)
\Pi_j\{\Pi_k,\Pi_k\}\Pi_j\;,\;\mu_j\biggr]\,{\bf 1}_{\Gamma _j^c}.$$

\ni By \eqref{onehand} and \eqref{otherhand}, $H_j\mu_j=\tilde \nu_j $ that is $H_j\mu_j=[R_j,\mu_j]+\nu_j$, where $R_j$ is defined by \eqref{def:Rj} and
\begin{equation}\label{defnuj}
\nu_j=\tilde \nu_j \,{\bf 1}_{\Gamma _j}. 
\end{equation}

\subsection{Analysis of a transport equation with a measure valued source term}

We work in a larger setting than those of Theorem~\ref{theo:propagation}. 
We consider an open subset $\Omega$ of $\R^D$ ($D\geq 1$), $\Sigma$ a smooth hypersurface
of $\Omega$, and
\begin{equation}\label{def-H}
H=\sum_{j=1}^Da_j\frac{\partial}{\partial x_j}=:a\cdot \nabla
\end{equation}
a $C^{\infty}$ vector field on $\Omega$  with no singular point
(i.e. $\forall x\in \Omega, H(x)\neq 0$) and with real valued coefficients $a_j$.
Let $\tilde x\in \Sigma$ and $\omega$ be an open neighborhood of $\tilde x$ in $\Omega$ such that $\Sigma\cap \omega=\{\gamma=0\}$,
where $\gamma$ is smooth on $\omega$ and $\nabla \gamma$ does not vanish on $\Sigma\cap \omega$. We start with a few definitions.
\begin{itemize}
\item $\tilde{x}\in \Sigma (0;H)$ if and only if $H$ is transverse to $\Sigma$ at $\tilde{x}$, that is $H \gamma(\tilde{x})\neq 0$.
 \item $\tilde{x}\in \Sigma (k;H)$, $k\in \N^*$ if and only 
if $H\gamma(\tilde{x})=\ldots=H^k\gamma(\tilde{x})=0$ and $H^{k+1}\gamma(\tilde{x})\neq 0$.
\item $\tilde{x}\in \Sigma (\infty;H)$ if and only if $H^k(\tilde{x})=0$ for all $k\in \N$.
\end{itemize}
In the two first cases, we say that $\Sigma$ has a finite order contact with $H$. 
In the sequel we shall need some notions and results of set theory, in particular the notions of ordinals and of transfinite induction. We refer to \cite{Kri,Kri2,Vaught} for details. 
\begin{defi}\label{def:F(alpha)}
\ni Define by transfinite recursion the set $F(\alpha ;H;\Sigma )$, where $\alpha$ is an ordinal, by:
\begin{itemize}
\item $F(0;H;\Sigma )=\Sigma (\infty ;H)$.  
\item If $\alpha=\beta+1$ is a successor ordinal, $F(\alpha ;H;\Sigma )$ is obtained by taking off from
$F(\beta ;H;\Sigma )$ the open subset of points $\tilde{x}\in F(\beta ;H;\Sigma )$ for which there exists an open neighborhood $U$ of $\tilde{x}$ and
an hypersurface $\Theta$ such that $(U\cap F(\beta ;H;\Sigma ))\subset(U\cap \Theta)$ and such that $H$ has only contact of finite order with $\Theta$ at $\tilde{x}$ (i.e. $\tilde{x}\in \Theta (k;H)$ for some $k\in \N$). 
\item If $\alpha=\bigcup_{\beta<\alpha} \beta$ is a limit ordinal, then $F(\alpha ;H;\Sigma )=\bigcap_{\beta<\alpha}F(\beta ;H;\Sigma )$.
\end{itemize}
\end{defi}
\ni Notice that $F(0;H;\Sigma )$, and thus any $F(\alpha ;H;\Sigma )$, is a closed subset of $\Omega$. Furthermore, 
$F(\cdot ;H;\Sigma )$ is non increasing: $F(\beta ;H;\Sigma )\subset F(\alpha ;H;\Sigma )$ if $\beta>\alpha$. 
The family of all $F(\alpha;H;\Sigma)$ is a family of closed subsets of $\R^D$ which is well-ordered for the inclusion. Using that $\R^D$ has a countable basis consisting of open subsets one may show that this family is countable, and thus may be indexed by a countable ordinal. As a consequence, there exists a (countable) ordinal $\beta_0$ such that $F(\alpha ;H;\Sigma )=F(\beta_0 ;H;\Sigma)$, for all $\alpha >\beta_0$.
\begin{defi}\label{def:FF}
We denote by $\FF (H;\Sigma )=F(\beta_0 ;H;\Sigma )$. In the context of Theorems \ref{theo:resest} and \ref{theo:propagation}, we will write,
for $j=1,\ldots ,m$, $\FF_j=\FF (H_j;\Sigma _j)$, where $\Sigma _j$ is an hypersurface containing the eigenvalues crossing and $H_j$ is the Hamiltonian flow associated to the function $\lambda_j$.
\end{defi}

\ni In this section, we prove 
\begin{prop}
\label{P:invariance}
Let $\mu$ be a matrix-valued Radon measure on $\Omega$, such that, as distributions,
\begin{equation}
\label{eq_mu}
H\mu+b\mu+\mu c=\nu,
\end{equation}
where $b$ and $c$ are smooth matrix-valued functions on $\Omega$
and $\nu$ is a matrix-valued Borel measure on $\Omega$ absolutely continuous
w.r.t. $\mu$. Assume furthermore that
\begin{equation}
\label{support_nu}
\supp \nu \subset \Sigma.
\end{equation}
Then $\supp \nu \subset \FF (H;\Sigma )$.
\end{prop}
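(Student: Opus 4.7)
The plan is to prove by transfinite induction on the ordinal $\alpha$ that $\supp\nu\subset F(\alpha;H;\Sigma)$; the already noted stabilization of the non-increasing family $(F(\alpha))$ at some countable $\beta_0$ will then give the conclusion $\supp\nu\subset\FF(H;\Sigma)$. The limit step is immediate from $F(\alpha)=\bigcap_{\beta<\alpha}F(\beta)$, while both the base case $\alpha=0$ and the successor step reduce to a single \emph{elementary step}: if $\Theta$ is a smooth hypersurface locally containing $\supp\nu$ near some $\tilde x\in\Theta$ and $H$ has finite order contact with $\Theta$ at $\tilde x$, then $\tilde x\notin\supp\nu$. Indeed, the base case applies this with $\Theta=\Sigma$, and at the successor step from $F(\beta)$ to $F(\beta+1)$ Definition~\ref{def:F(alpha)} produces the required hypersurface $\Theta$, the induction hypothesis $\supp\nu\subset F(\beta)$ ensuring $\supp\nu\cap U\subset\Theta$ near the point to be removed.

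I would prove the elementary step by a second, finite induction on the order $k\geq 0$ of contact of $H$ with $\Theta$ at $\tilde x$. In the transverse case $k=0$, choose local coordinates $(s,y)\in\R\times\R^{D-1}$ straightening $H=\partial_s$, with $\Theta=\{s=0\}$. Equation~\eqref{eq_mu} then reads $\partial_s\mu=\nu-b\mu-\mu c$, so $\partial_s\mu$ is a matrix-valued Radon measure. Testing this identity against $\phi_\eta(s,y)=s\chi_\eta(s)\psi(y)$, where $\chi_\eta$ is a smooth bump of width $\eta$ at $0$ and $\psi\in C_c^\infty(\R^{D-1})$, the pairing $\langle\partial_s\mu,\phi_\eta\rangle$ is bounded by $C\eta$ (since $|\phi_\eta|\leq\eta\|\psi\|_\infty$ and $\partial_s\mu$ is a Radon measure), while integration by parts on the other side, combined with dominated convergence applied to $\chi_\eta$ and $s\chi_\eta'(s)$, yields the limit $-\int\psi\,d(\mu|_{\{s=0\}})$. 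Hence $\mu|_{\{s=0\}}=0$ near $\tilde x$, and since $\nu\ll\mu$ with $\supp\nu\subset\{s=0\}$, we conclude $\nu=0$ near $\tilde x$.

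For the inductive step $k\geq 1$, let $\gamma$ be a local defining function for $\Theta$ and shrink the neighborhood $U$ of $\tilde x$ so that $H^{k+1}\gamma$ does not vanish on $U$; every point of $\Theta\cap U$ then has contact of order at most $k$ with $H$. Applying the inner induction hypothesis, with the same hypersurface $\Theta$, to every point of $\Theta\cap U$ of contact strictly less than $k$ shows that such points all lie outside $\supp\nu$. Consequently
\begin{equation*}
\supp\nu\cap U\;\subset\;\{\gamma=0\}\cap\{H\gamma=0\}\cap\cdots\cap\{H^k\gamma=0\}\;\subset\;\widetilde\Theta:=\{H^k\gamma=0\}.
\end{equation*}
Since $H(H^k\gamma)(\tilde x)=H^{k+1}\gamma(\tilde x)\neq 0$, the set $\widetilde\Theta$ is a smooth hypersurface near $\tilde x$ and $H$ is transverse to it; the already-established case $k=0$ applied with $\widetilde\Theta$ in place of $\Theta$ then yields $\tilde x\notin\supp\nu$, completing the inner induction.

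The crux of the whole argument is the transverse case: one must rule out any Dirac component of $\mu$ on the codimension one set $\{s=0\}$ using only that $\partial_s\mu$ is a Radon measure. Matrix-valuedness is handled component by component, and the smooth multiplications by $b$ and $c$ produce only lower-order measure contributions, which cannot compensate the jump-type singularity of a putative Dirac component of $\mu$. A secondary point requiring care is the compatibility of the two nested inductions: because Definition~\ref{def:F(alpha)} supplies $\Theta$ as an ambient smooth hypersurface in $\Omega$, the finite inner induction runs inside each step of the outer transfinite induction without modification.
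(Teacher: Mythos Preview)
Your proof is correct and follows essentially the same route as the paper's: the paper also establishes the transverse case first (Lemma~\ref{L:invariance}), bootstraps to finite order contact by an inner induction that replaces $\Theta$ by $\{H^{k}\gamma=0\}$ (Lemma~\ref{L:invariance2}), and then runs the transfinite induction over the $F(\alpha;H;\Sigma)$. The only cosmetic differences are that the paper straightens $H$ but not $\Theta$ in the transverse step, using test functions $\chi(\gamma/\eps)\phi$ rather than your $s\chi_\eta(s)\psi(y)$, and that it states the two inner lemmas for the fixed hypersurface $\Sigma$ before reusing them verbatim for the auxiliary hypersurfaces $\Theta$ arising in Definition~\ref{def:F(alpha)}.
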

\begin{rem}
Proposition~\ref{P:invariance} implies that, in the sense of distributions on the open set $\Omega\backslash \FF (H;\Sigma )$,
$$H\mu+b\mu+\mu c=0 .$$
\end{rem}
\ni The proof of Proposition~\ref{P:invariance} proceeds in three steps. In Lemma~\ref{L:invariance}, we prove that ${\bf 1}_{\Sigma (0;H)}\nu =0$,
then, in Lemma~\ref{L:invariance2}, that ${\bf 1}_{\Sigma (k;H)}\nu =0$ for all $k\in\N$. These two facts imply that
${\rm Supp}\,\nu \subset F(0;H;\Sigma )$ and we can conclude the proof by transfinite induction.

\begin{lem}
\label{L:invariance}
Let $H$, $b$, $c$ and $\Sigma$ be as in Proposition \ref{P:invariance}. Then ${\bf 1}_{\Sigma (0;H)}\mu={\bf 1}_{\Sigma (0;H)}\nu=0$.
\end{lem}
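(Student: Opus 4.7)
The plan is to localize near a transverse point $\tilde x\in \Sigma(0;H)$, straighten the flow of $H$, and exhibit test functions that detect any singular part of $\mu$ concentrated on $\Sigma$; the equation will then force that part to vanish.

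Since $H\gamma(\tilde x)\neq 0$, the flowbox theorem gives local coordinates $(s,y)\in(-\delta,\delta)\times V$ on a neighborhood $U\subset \Omega$ of $\tilde x$ in which $H=\partial_s$ and $\Sigma\cap U=\{s=0\}$; any divergence term produced by the change of variables is smooth and can be absorbed into $b$. In these coordinates \eqref{eq_mu} reads $\partial_s\mu=\nu-b\mu-\mu c$, so $\partial_s\mu$ is itself a locally finite matrix-valued Radon measure on $U$. This is the key structural fact: a $\delta(s)\otimes\mu_0(y)$ component of $\mu$ would produce a $\delta'(s)\otimes\mu_0(y)$ contribution to $\partial_s\mu$, a genuine order-one distribution that no measure on the right-hand side can match. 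To make this rigorous I would pair \eqref{eq_mu} with the test functions
\[
\phi_\eps(s,y)\;:=\;s\,\chi(s/\eps)\,\eta(y)\,A,\qquad \chi\in C_c^\infty(\R),\ \chi(0)=1,\ \eta\in C_c^\infty(V),\ A\in\C^{N,N}.
\]
Each $\phi_\eps$ vanishes on $\Sigma\cap U$, so $\langle\nu,\phi_\eps\rangle=0$; since $\|\phi_\eps\|_\infty\leq C\eps$ on a support contained in $\{|s|\leq C\eps\}\cap(\R\times\supp\eta)$, one has $|\langle b\mu+\mu c,\phi_\eps\rangle|\leq C'\eps\,|\mu|(\supp\phi_\eps)\to 0$ as $\eps\to 0$. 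Integration by parts gives $\langle\partial_s\mu,\phi_\eps\rangle=-\int[\chi(s/\eps)+(s/\eps)\chi'(s/\eps)]\eta(y)A\cdot d\mu$; the second summand vanishes in the limit by dominated convergence (it is uniformly bounded, supported in $\{|s|\leq C\eps\}$, and vanishes at $s=0$), while the first converges to $\int\eta(y)A\cdot d\mu_0(y)$, where $\mu_0:=\mu|_\Sigma$ denotes the restriction of $\mu$ to $\Sigma$.

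Passing to $\eps\to 0$ in the equation yields $\int\eta(y)A\cdot d\mu_0=0$ for every $\eta$ and every $A$, hence $\mu_0=0$ on $U$; in other words $\mu$ does not charge $\Sigma$ near $\tilde x$. Covering the relatively open subset $\Sigma(0;H)$ of $\Sigma$ by such flowboxes produces ${\bf 1}_{\Sigma(0;H)}\mu=0$, and ${\bf 1}_{\Sigma(0;H)}\nu=0$ then follows from $\nu\ll\mu$ (interpreted entrywise with respect to the positive scalar measure $\mathrm{tr}\,\mu$) together with $\supp\nu\subset\Sigma$. The only delicate step is the dominated-convergence identification of the limit of $\int\chi(s/\eps)\eta(y)A\cdot d\mu$ with the mass of $\mu$ on $\Sigma$; once this is in place, the heart of the argument is the incompatibility between $\partial_s\mu$ being a measure and $\mu$ having a singular part on the transverse hypersurface $\{s=0\}$.
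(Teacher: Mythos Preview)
Your argument is correct and follows essentially the same route as the paper: straighten the flow via the flowbox theorem, test the equation against an $\eps$-scaled family of cutoffs concentrating on $\Sigma$, and use dominated convergence to identify the limit as the mass of $\mu$ on $\Sigma$, which must then vanish. The only cosmetic differences are that the paper straightens $H$ but not $\Sigma$ (working instead with $\chi(\gamma/\eps)$ and the transversality $\partial_{x_1}\gamma\geq c_0>0$), and that it uses a cutoff $\chi$ with $\chi'(0)=1$ so that the $1/\eps$ term produced by differentiation isolates the $\Sigma$-mass directly, rather than your device of inserting the factor $s$ to annihilate the $\nu$-contribution.
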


\begin{proof} This is a classical property; we give a proof for the sake of
completeness. \\
Since $\nu$ is absolutely continuous w.r.t. $\mu$, it suffices to show ${\bf 1}_{\Sigma (0;H)}\mu=0$.
We must show a local property, namely that if $\tilde x\in
\Sigma (0;H)$, there exists an open neighborhood $U$ of $\tilde x$
such that ${\bf 1}_{U\cap \Sigma}\mu =0$. Let $\tilde x\in \Sigma (0;H)$.

\medskip

\noindent \emph{Step 1. Reduction of the problem.}
We first show that we may assume (locally) that $H=\frac{\partial}{\partial x_1}$, i.e. that we may assume that near $\tilde x$,
$\mu$ satisfies an equation of the form
\begin{equation}
\label{eq_mu_bis}
\frac{\partial}{\partial x_1}\mu+b\mu+\mu c=\nu.
\end{equation}

\noindent Consider a neighborhood $U$ of $\tilde x$ such that $\Sigma$ is defined, in $U$, by the equation $\gamma=0$, where
$\gamma$ is a $C^{\infty}$ function such that $\nabla \gamma(x)\neq 0$ for all $x$ in $U$.
Using that $H$ is transverse to $\Sigma$ in $\tilde x$, and taking a smaller neighborhood $U$ of $\tilde x$ if necessary, one may find
a $C^{\infty}$ diffeomorphism $\chi$ from an open subset $V$ of $\R^D_y$ to $U$ such that
$$ \forall f\in C^{\infty}_0(U),\quad \frac{\partial}{\partial y_1}\left(f\circ \chi\right)=\left(H f\right)\circ \chi.$$
Define the measures $\mu_1$ and $\nu_1$ in $V$ by
$$\forall f\in C^{0}_c(U), \quad \big \langle \mu_1, f\circ  \chi\big\rangle =\big\langle  \mu, f\big\rangle ,\quad \big \langle \nu_1,
f\circ  \chi\big\rangle =\big\langle  \nu, f\big\rangle .$$
By \eqref{eq_mu} and \eqref{def-H}, for $f\in C^{0}_c(U)$, 
\begin{eqnarray*}
 \big\langle  \frac{\partial}{\partial y_1}\mu_1,f\circ \chi\big\rangle &=&-\big\langle \mu_1,
\frac{\partial }{\partial y_1}(f\circ\chi) \big\rangle =
 -\big\langle \mu_1,\left( H f\right)\circ \chi\big\rangle =-\big\langle \mu,H f\big\rangle\\
&=&
\big\langle H\mu,f\big\rangle +\big\langle (\nabla \cdot a)\mu,f\big\rangle \\
&=&\big\langle \nu ,f\big\rangle-\big\langle \mu, bf\big\rangle-\big\langle \mu , fc\big\rangle+\big\langle \mu,(\nabla \cdot a)f\big\rangle\\
&=&\big\langle \nu _1,f\circ\chi \big\rangle-\big\langle (b\circ\chi -(\nabla \cdot a)\circ\chi)\mu _1, f\circ\chi\big\rangle-\big\langle \mu_1(c\circ\chi ), f\circ\chi\big\rangle .
\end{eqnarray*}
Thus $(\partial /\partial y_1)\mu_1+b_1\mu_1+\mu_1c_1=\nu_1$ in $V$, with smooth $b_1:=b\circ \chi -(\nabla \cdot a)\circ\chi$ and $c_1:=c\circ\chi$.

\medskip

\noindent\emph{Step 2. Proof of the vanishing of ${\bf 1}_{\Sigma}\mu$ near $\tilde x$.}
We now assume that \aref{eq_mu_bis} holds near $\tilde x\in \Sigma (0;H)$, where $H=\frac{\partial}{\partial x_1}$. Of course one may assume $\tilde x=0$. Let $U$ a small open neighborhood of $0$ such 
that $\Sigma \cap U=\{\gamma(x)=0\}$ where $\gamma$ is real and smooth, and $\partial_{x_1}\gamma\geq c_0>0$ on $U$. Consider $\chi\in{\cal C}_0^\infty(\R ;\R)$ a bounded function such that $\chi'(0)=1$. 
Let $\psi\in{\cal C}_0^\infty(U)$ with matrix values. Then $\phi :=(1/\partial_{x_1}\gamma)\psi\in{\cal C}_0^\infty(U)$. By \aref{eq_mu_bis}, for all $\eps>0$, 
$$\big\langle \partial_{x_1}\mu ,\chi (\gamma /\eps )\phi \big\rangle =\big\langle \nu ,\chi (\gamma /\eps )\phi \big\rangle -\big\langle \mu ,\chi (\gamma /\eps )b\phi \big\rangle 
-\big\langle \mu ,\chi (\gamma /\eps )\phi c\big\rangle = O(\eps ^0). $$
Therefore 
$$O(\eps ^0)=\big\langle \partial_{x_1}\mu ,\chi (\gamma /\eps )\phi \big\rangle =-\big\langle \mu ,\chi (\gamma /\eps )\partial_{x_1}\phi \big\rangle 
-\frac{1}{\eps}\big\langle \mu ,\chi '(\gamma /\eps )(\partial_{x_1}\gamma)\phi \big\rangle=O(\eps^0). $$
Multiplying by $\eps$ and letting $\eps$ go to $0$, we arrive at: $\big\langle \mu ,(\partial_{x_1}\gamma){\bf 1}_\Sigma\phi \big\rangle =0$. Thus $\big\langle \mu ,{\bf 1}_\Sigma\psi \big\rangle =0$ and ${\bf 1}_\Sigma\mu$ vanishes on $U$. \qed
\end{proof}

$ $

\noindent We next show that we can extend the preceding Lemma to
finite order tangency.
\begin{lem}
\label{L:invariance2}
Let $H$, $b$, $c$ and $\Sigma$ be as in Proposition \ref{P:invariance}. For all $k\geq 0$,
${\bf 1}_{\Sigma (k;H)}\mu={\bf 1}_{\Sigma (k;H)}\nu=0$.
\end{lem}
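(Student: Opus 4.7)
The plan is to argue by induction on $k$. The base case $k=0$ is exactly Lemma~\ref{L:invariance}, so assume that $\mathbf{1}_{\Sigma(j;H)}\mu = \mathbf{1}_{\Sigma(j;H)}\nu = 0$ for all $j < k$ and let us establish it for $k$. Since the statement is local, it is enough to show that every $\tilde{x}\in\Sigma(k;H)$ has an open neighborhood $U$ on which $\mathbf{1}_{\Sigma(k;H)}\mu$ (and hence $\mathbf{1}_{\Sigma(k;H)}\nu$, by absolute continuity) vanishes.

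Fix such an $\tilde{x}$. By definition $H^{k+1}\gamma(\tilde{x})\neq 0$, so by continuity there is an open neighborhood $U$ of $\tilde{x}$ on which $H^{k+1}\gamma$ never vanishes. The key observation is then that any point $x\in\Sigma\cap U$ must lie in $\Sigma(j;H)$ for some $j\in\{0,\ldots,k\}$: a tangency of order $k+1$ or more at $x$ would force $H^{k+1}\gamma(x)=0$, contrary to our choice of $U$. Combined with the inductive hypothesis, this gives
\[
\mathbf{1}_{\Sigma}\mu\,\big|_U = \mathbf{1}_{\Sigma(k;H)}\mu\,\big|_U,\qquad \mathbf{1}_{\Sigma}\nu\,\big|_U = \mathbf{1}_{\Sigma(k;H)}\nu\,\big|_U,
\]
and in particular $\supp(\nu|_U)$ is contained in $\Sigma(k;H)\cap U$.

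The second key idea is to replace $\Sigma$ by the auxiliary hypersurface
\[
\Theta := \{x\in U : (H^{k}\gamma)(x)=0\}.
\]
Since $H(H^{k}\gamma)(\tilde{x}) = H^{k+1}\gamma(\tilde{x})\neq 0$, the gradient of $H^{k}\gamma$ does not vanish at $\tilde{x}$; shrinking $U$ if necessary, $\Theta$ is a smooth hypersurface of $U$ and $H$ is everywhere transverse to $\Theta$ on $U$, i.e.\ $\Theta = \Theta(0;H)$ in $U$. By construction every point of $\Sigma(k;H)$ satisfies $H^{k}\gamma = 0$, hence $\Sigma(k;H)\cap U\subset \Theta$, and therefore $\supp(\nu|_U)\subset\Theta$.

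Now the equation $H\mu+b\mu+\mu c=\nu$ holds on $U$ with $\supp(\nu|_U)\subset\Theta$ and $H$ transverse to $\Theta$ throughout $U$. This is exactly the situation of Lemma~\ref{L:invariance} (applied with $\Theta$ playing the role of $\Sigma$), so we conclude $\mathbf{1}_{\Theta}\mu = \mathbf{1}_{\Theta}\nu = 0$ on $U$. Since $\Sigma(k;H)\cap U\subset\Theta$, this gives $\mathbf{1}_{\Sigma(k;H)}\mu = \mathbf{1}_{\Sigma(k;H)}\nu = 0$ on $U$, completing the induction.

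The only delicate point in this plan is verifying that the neighborhood $U$ can be chosen uniformly enough so that all three ingredients coexist: $H^{k+1}\gamma\neq 0$, $\Theta$ is a genuine smooth hypersurface, and $H$ is everywhere transverse to $\Theta$. All three follow by continuity from the non-vanishing of $H^{k+1}\gamma(\tilde{x})$, so this amounts to a single shrinking of $U$, and the argument poses no further difficulty.
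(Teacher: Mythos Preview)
Your proposal is correct and follows essentially the same approach as the paper: induction on $k$, with the inductive step carried out by introducing the auxiliary hypersurface $\{H^{k}\gamma=0\}$ (the paper writes $\Xi=\{H^{k+1}\gamma=0\}$ with the index shifted by one) and invoking Lemma~\ref{L:invariance} on it. The only cosmetic difference is that you first restrict to a neighborhood where $\Sigma\cap U\subset\bigcup_{j\le k}\Sigma(j;H)$, whereas the paper keeps the higher-order and infinite-order contact sets around and notes they are already contained in $\Xi$; both routes lead to $\supp(\nu|_U)\subset\Theta$ just as well.
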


\begin{proof}
We will use, in a simpler setting, an induction argument on the order at which $H$ is tangent to $\Sigma$ which is related to the argument in \cite[Section 4]{MS} (see also a similar argument in the context of defect measures in \cite{BuLe}).\\
As $\nu$ is absolutely continuous w.r.t. $\mu$, it is sufficient to show
${\bf 1}_{\Sigma (k;H)}\mu=0$, for all $k\geq 0$. 
Let us prove this assertion by induction. By Lemma
\ref{L:invariance}, the property holds for $k=0$. Assume that, for some $k\geq 0$,
\begin{equation}
\label{induction}
{\bf 1}_{\Sigma (0;H)}\mu ={\bf 1}_{\Sigma (1;H)}\mu =\ldots={\bf 1}_{\Sigma (k;H)}\mu=0.
\end{equation}
Let $\tilde x\in \Sigma (k+1;H)$. By the definition of $\Sigma (k+1;H)$, $H^{k+1} \gamma(\tilde x)=0$ and $H^{k+2}\gamma(\tilde x)\neq 0$. 
Let $U$ be a small neighborhood of $\tilde x$ such that
\begin{equation}
\label{def_U}
\forall x\in U, \quad H^{k+2}\gamma(x)\neq 0.
\end{equation}
By \aref{def_U}, the set
$$ \Xi :=\{x\in U,\; H^{k+1}\gamma(x)=0\}$$
is an hypersurface and $H$ is transverse to $\Xi $. The support of $\nu$ is included in
$\Sigma=\bigcup_{0\leq j\leq \infty}\Sigma (j;H)$. Furthermore, $\nu$ is absolutely continuous with respect to $\mu$, thus \aref{induction} implies that ${\rm supp}\nu \subset \bigcup_{k+1\leq j\leq \infty}\Sigma (j;H)$. In particular, ${\rm supp}\nu _{\restriction U}\subset \Xi $. By Lemma \ref{L:invariance} with $\Sigma$ replaced by $\Xi $ and $\Omega$ replaced by $U$,
$${\bf 1}_{\Sigma (k+1;H)\cap U}\mu={\bf 1}_{\{x\in U\;|\; H^{j+1}\gamma(x)=0,\; H^{j+2}\gamma(x)\neq 0\}}\mu =0,$$
which completes the induction argument.
\qed
\end{proof}

$ $

\noindent We can now close the proof of Proposition \ref{P:invariance}.

$ $

\begin{proof}
\ni We will show by transfinite induction that for all ordinal $\alpha$, $\supp \nu \subset F(\alpha ;H;\Sigma )$. For simplicity, denote $F(\alpha ;H;\Sigma )$ by $F(\alpha)$. 
By Lemma \ref{L:invariance2}, $\nu$ vanishes on $\Sigma (k;H)=0$, for all $k$. Using that $\supp\nu \subset \Sigma$ and that
$$ \Sigma \setminus \bigcup_{0\leq k<\infty } \Sigma (k;H)=\Sigma (\infty ;H)=F(0)$$
is a closed set, we get that $\supp \nu \subset F(0)$.

\ni Let us take an ordinal $\alpha>0$ and assume that $\supp \nu \subset F(\beta )$ for all $\beta<\alpha$. If $\alpha$ is a limit ordinal, then by the induction hypothesis $\supp \nu \subset \bigcap_{\beta<\alpha} F(\beta )=F(\alpha )$.

\ni If $\alpha=\beta+1$ for some $\beta$, then we know by the induction hypothesis that $\supp \nu \subset F(\beta )$.
Let $x\in F(\beta )\setminus F(\alpha )$. This means that there exists an open neighborhood $U$ of $x$ in $\Omega$
such that $F(\beta )\cap U\subset \Theta$, where $\Theta$ is a codimension $1$ submanifold of $U$ which has only finite
order contact with $H$. By induction hypothesis, $\supp\nu _{\restriction U}\subset \Theta$. By 
Lemma~\ref{L:invariance2}, 
${\bf 1}_{\Theta}\nu _{\restriction U}=0$, which shows that $x\notin \supp \nu$. Thus $\supp \nu\subset F(\alpha )$, which ends
the proof of Proposition \ref{P:invariance}.
\qed
\end{proof}

$ $

\ni Let us finally  have a closer look at  the set $\FF (H;\Sigma )$. We first notice that if
$\Sigma$ contains a piece $(x(s))_{s\in (0,s_0)}$ of a characteristic curve of the field $H$, then so does $\Sigma (\alpha ;H)$ for all ordinal $\alpha$,
and thus $\FF (H;\Sigma )$ is not empty. A sufficient condition
for $\FF (H;\Sigma )$ to be empty is given by the following proposition, which we will prove by a classical argument using Baire's Theorem.

\begin{prop}
\label{P:vide}
Let $\Sigma$ be a codimension $1$ submanifold in an open subset $\Omega$ of $\R^D$ and $H$ a  smooth vector field in $\Omega$. Assume
\begin{equation}
\label{inclusion}
\Sigma (\infty ;H)\subset \bigcup_{j \in J} Z_j,
\end{equation}
where $\{Z_j\}_{j\in J}$ is a finite or countable family of hypersurfaces that have only a finite order contact with $H$. Then $\FF (H;\Sigma )=\emptyset$. \\
In particular, if $\Sigma (\infty ;H)$ is at most countable or is an hypersurface with only finite order contact with $H$, then $\FF (H;\Sigma )$ is empty.
\end{prop}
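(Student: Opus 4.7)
The plan is to argue by contradiction using Baire's theorem. Assume $\FF(H;\Sigma)$ is nonempty and let $\beta_0$ be the countable ordinal at which the sequence $F(\alpha;H;\Sigma)$ stabilizes, so that $\FF(H;\Sigma) = F(\beta_0;H;\Sigma) = F(\beta_0+1;H;\Sigma)$. The goal is to exhibit a point $x\in\FF(H;\Sigma)$ together with a hypersurface $Z_j$ from the given family and an open neighborhood $U$ of $x$ such that $U\cap\FF(H;\Sigma)\subset Z_j$; since $Z_j$ has only finite order contact with $H$ at $x$, Definition~\ref{def:F(alpha)} then forces $x\notin F(\beta_0+1;H;\Sigma)$, contradicting $x\in\FF(H;\Sigma)$.

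To produce such a point, I first write each $Z_j$ as a countable union $Z_j=\bigcup_n Z_j^n$ of subsets compact in $\Omega$ (hence closed in $\Omega$); this is possible because each $Z_j$ is a smooth submanifold of $\R^D$, hence second countable and $\sigma$-compact. Fixing any $x_0\in\FF(H;\Sigma)$ and a closed ball $K=\overline{B}(x_0,r)\subset\Omega$, the set $\FF(H;\Sigma)\cap K$ is a nonempty compact metric space, hence a Baire space. Since $\FF(H;\Sigma)\subset\Sigma(\infty;H)\subset\bigcup_j Z_j$ by hypothesis,
$$\FF(H;\Sigma)\cap K \;=\; \bigcup_{j,n}\bigl(\FF(H;\Sigma)\cap K\cap Z_j^n\bigr)$$
is a countable union of closed subsets of $\FF(H;\Sigma)\cap K$. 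Baire's theorem then provides indices $j,n$ and an open set $V\subset\R^D$ with $\emptyset\neq V\cap K\cap\FF(H;\Sigma)\subset Z_j^n\subset Z_j$. Choosing $x\in V\cap\mathrm{int}(K)\cap\FF(H;\Sigma)$ and $U:=V\cap\mathrm{int}(K)$ yields $U\cap\FF(H;\Sigma)\subset Z_j$, which delivers the contradiction announced above since $Z_j$ has finite order contact with $H$ at $x$.

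For the last sentence of the proposition: if $\Sigma(\infty;H)$ is at most countable, enumerate it as $\{x_j\}_{j\in J}$; using that $H$ has no zero on $\Omega$, choose for each $j$ a small neighborhood of $x_j$ in an affine hyperplane through $x_j$ transverse to $H(x_j)$. By continuity of $H$, shrinking this neighborhood if necessary produces a hypersurface $Z_j\ni x_j$ to which $H$ is transverse everywhere on $Z_j$, hence of only zeroth-order contact. If instead $\Sigma(\infty;H)$ is itself a hypersurface with only finite order contact with $H$, take $J=\{1\}$ and $Z_1=\Sigma(\infty;H)$; both cases then fall under the first assertion. The main technical point where care is required is ensuring that the pieces $Z_j^n$ can actually be taken closed in $\Omega$, so that Baire's theorem applies legitimately to $\FF(H;\Sigma)\cap K$; this is exactly what the $\sigma$-compactness of each smooth hypersurface in $\Omega$ provides.
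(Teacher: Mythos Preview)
Your argument is essentially the paper's: both hinge on Baire's theorem applied to a stabilized $F(\beta)$, using the covering by the $Z_j$ to produce a point that the definition of $F(\beta+1)$ would remove. The paper packages this inside a transfinite induction showing strict decrease $F(\beta+1)\subsetneq F(\beta)$ at every stage with $F(\beta)\neq\emptyset$, while you apply it once at the stabilization ordinal $\beta_0$; your version is slightly more direct but not genuinely different. Your $\sigma$-compact decomposition $Z_j=\bigcup_n Z_j^n$ is a nice touch that makes explicit why the pieces of the covering are closed, a point the paper glosses over.

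There is, however, a small slip. From $\emptyset\neq V\cap K\cap\FF(H;\Sigma)\subset Z_j$ you cannot simply ``choose $x\in V\cap\mathrm{int}(K)\cap\FF(H;\Sigma)$'': nothing prevents every point of $V\cap K\cap\FF(H;\Sigma)$ from lying on $\partial K$, in which case $V\cap\mathrm{int}(K)$ is not a neighborhood of any such point in $\Omega$. The fix is immediate: apply Baire directly to $\FF(H;\Sigma)$, which is closed in the completely metrizable space $\Omega$ and hence itself Baire (or, if you prefer to keep your compact $K$, apply Baire to the nonempty open subset $\mathrm{int}(K)\cap\FF(H;\Sigma)$ of your compact Baire space, since open subsets of Baire spaces are Baire). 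Either way you obtain an open $U\subset\Omega$ with $\emptyset\neq U\cap\FF(H;\Sigma)\subset Z_j$, and the contradiction follows exactly as you wrote.
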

\begin{proof}
Denote again $F(\alpha ;H;\Sigma )$ by $F(\alpha)$ for simplicity. 
We will show by induction on the ordinal $\alpha$ that the property $\PP(\alpha)$:
\begin{quote}
``For all ordinal $\beta$ such that $\beta<\alpha$ and $F(\beta )\not=\emptyset$, $F(\alpha )$ is strictly included in~$F(\beta )$.''
\end{quote}
holds for all ordinal $\alpha$.
It is classical that this implies that $F(\alpha )$ is empty if $\alpha$ is large enough.

\smallskip

\ni The property $\PP(0)$ holds trivially. \\
\ni Let $\alpha>0$ be an ordinal and assume that $\PP(\gamma)$ holds for any $\gamma<\alpha$.
First assume that $\alpha$ is a limit ordinal. Let $\beta<\alpha$ and assume that $F(\beta )$ is not empty.
Then $\beta+1<\alpha$, and the property $\PP(\beta+1)$ implies that $F(\beta+1 )$ is strictly included in $F(\beta )$.
As a consequence, $F(\alpha )$ is strictly included  in $F(\beta )$, which shows $\PP(\alpha)$.

\ni It remains to treat the case when $\alpha=\beta+1$ is a successor ordinal. Assume that $F(\alpha )=F(\beta )$.
Using assumption \eqref{inclusion} and that $F(\beta )\subset F(\alpha )$, we get
\begin{equation}
\label{union}
F(\beta )=\bigcup_{j\in J} (F(\beta )\cap Z_j).
\end{equation}
\ni The sets $F(\beta )\cap Z_j$ are closed in $F(\beta )$. We will show that they have empty interior in $F(\beta )$.
If not, there would exist an open subset $U$ of $\Omega$ such that
$\emptyset\neq F(\beta )\cap U\subset F(\beta )\cap Z_j$. As $Z_j$ is an hypersurface which have only finite order contact with $H$,
we would get by the definition of $F(\alpha )=F(\beta+1 )$ that $F(\alpha )\cap U$ is empty, contradicting the fact that
$F(\alpha )=F(\beta )$.

\ni The union \eqref{union} is thus a countable union of closed subsets of $F(\beta )$ that have empty interior in $F(\beta )$.
By Baire Theorem in the complete metric space $F(\beta )$,   we get that $F(\beta )$ has empty interior in itself,
which shows of course that  $F(\beta )$ is empty, concluding the proof of the induction argument and thus of Proposition \ref{P:vide}.\qed
\end{proof}

\subsection{Proof of the propagation }\label{proof17}

Let us now conclude the proof of Theorem~\ref{theo:propagation}. Recall that $\mu_j=\Pi_j\mu\Pi_j$ on 
$\omega$ (see before \eqref{def-mu_j}). By Proposition~\ref{prop:prop}, $H_{j}\mu_j=[R_j,\mu_j]+\nu_j$ on 
$\omega$, where $\nu_j$ is given by~\aref{defnuj} and~\aref{tilde-nu-j}, and ${\rm supp}\, \nu_j\subset\Gamma _j\cap\omega$. Since $\Gamma _j$ is included in a codimension one submanifold $\Sigma _j$, Proposition~\ref{P:invariance} shows that $\supp\nu _j\subset \omega\cap\FF_j$, where $\FF_j=\FF (H_j;\Sigma _j)$ (cf. Definition~\ref{def:FF}). Since the matrix $B_j$ vanishes on $\omega\cap\FF_j$, by assumption, 
so does also $\nu_j$. Thus $\nu_j=0$. This closes the proof  of Theorem~\ref{theo:propagation}.

%%%%%%%%%%%%%%%%%%%%%%%%%%%%%%%%%%%%%%%%%%%%%%%%%%%%%%%%%%%%%%%%%%%%%%

\section{Highly degenerated situations}
\label{S:degenerated}

In this section, we give an example of $2\times 2$-matrix valued potential $M$ as 
in \eqref{formeM} for which the assumptions of Theorem~\ref{theo:resestbis} are not satisfied. 
Furthermore the conclusion of Theorem~\ref{theo:propagation} (and therefore its hypothesis) 
is also false in that case.

$ $

\noindent{\bf The potential}: We choose $d=N=2$ and take $d_1:\R\rightarrow \R$ a smooth function such that $d'_1(0)>0$. Take $\chi :\R\rightarrow \R$ a smooth function such that $\chi =0$ on $[0;1]$ and positive outside. Let  
$$d(x_1,x_2)=\chi(x_2)d_1(x_1)$$ and, for some $k\in\R$, 
$$M(x)\ =\ d(x){\rm Id}\, +\, \exp\left(-\frac{1}{|x_1|}\right)V(x_2)\hspace{.2cm}\mbox{where}\hspace{.2cm}
V(x_2)\ =\ \left(\begin{array} {cc}{\rm cos}(2k\pi x_2) & {\rm sin}(2k\pi x_2) \\
{\rm sin}(2k\pi x_2) & -{\rm cos}(2k\pi x_2)\end{array}\right).$$
The crossing set is ${\cal C}=\{x_1 =0\}$. We
denote by~$\lambda^{\pm}(x)$, $\Pi^{\pm}(x_2)$ the eigenvalues and
eigenprojectors associated with $M$:
\begin{gather}
\notag
\lambda^{\pm}(x)=\chi(x_2)d_1(x_1)\pm \exp\left(-\frac{1}{|x_1|}\right),\\
\label{explicitPi}
\Pi^+(x_2)=\frac 12\left(\begin{array} {cc} 2\cos^2(k\pi x_2) & \sin(2k\pi x_2) \\ \sin(2k\pi x_2) & 2\sin^2(k\pi x_2)\end{array}
\right),\;\Pi^-(x_2)=\frac 12\left(\begin{array} {cc} 2\sin^2(k\pi x_2) & -\sin(2k\pi x_2) \\ -\sin(2k\pi x_2) & 2\cos^2(k\pi x_2)
\end{array}\right).
\end{gather}

\ni We are concerned with the time-dependent semiclassical Schr\"odinger equation
\begin{equation}\label{eq:schroedinger-t}
i\eps\partial_t \psi^\eps=-\frac{\eps^2}{2}\Delta \psi^\eps +M(x)\psi^\eps
\end{equation}
with initial data $\psi^\eps_0$, such that $(\psi^\eps_0)_{\eps >0}$ is bounded in $L^2(\R^2)$.

$ $

\noindent {\bf The classical trajectories}: The trajectories $s\mapsto (x_1^\pm(s),x_2^\pm(s),\xi_1^\pm (s),\xi_2^\pm (s))$ satisfy
$$\displaylines{\dot x_1^\pm(s)=\xi_1^\pm(s),\;\;\dot \xi_1^\pm(s)=-\chi (x_2^\pm(s))d'_1(x_1^\pm(s))\mp 
\frac{x_1^\pm(s)}{|x_1^\pm(s)|^3}{\rm e}^{-1/|x_1^\pm(s)|}\cr
\dot x_2^\pm(s)=\xi_2^\pm(s),\;\;\dot \xi_2^\pm(s)=-\chi'(x_2^\pm(s))d_1(x_1^\pm(s)),\cr}$$
with the convention that $\exp (-1/|t|)t/|t|^3=0$ for $t=0$. 
We choose the initial condition: $x_1^\pm(0)=\xi_1^\pm(0)=x_2^\pm(0)=0$ and $\xi_2^\pm(0)=\eta >0$. 
We see that, on $[0, 1/\eta ]$, $x_1^\pm(s)=0=\xi_1^\pm(s)$, $\xi_2^\pm(s)=\eta$, and 
$x_2^\pm(s)=s\eta$. Thus, on this time interval, the $+$ and $-$ trajectories coincide and stay in the crossing set $\{(x;\xi )\in(\R^2)^2; x_1=0\}$.

$ $

\noindent{\bf The data}: We choose as initial data a family 
$(\psi^\eps_0)_{\eps >0}$ having a unique Wigner measure of the form
$$\mu(x,\xi)=\delta(x_1)\otimes \delta(x_2)\otimes \delta(\xi_1)\otimes \delta(\xi_2-\eta)\left(a^+\Pi^+(0)+a^-\Pi^-(0)\right)$$
with $a^\pm\in\R^+$. The later is localized at the previously chosen initial 
condition for the trajectories. We expect that the behaviour of the solutions 
of \eqref{eq:schroedinger-t} depends on the concentration around the crossing ${\cal C}$ 
of $(\psi^\eps_0)_{\eps >0}$. 
To describe this concentration, we introduce a two-scale Wigner measures. \\
We choose $\alpha\in ]0,1/2[$ and we analyze
\begin{equation}\label{def-I-eps}
I^\eps(a)=\left(\op_\eps\left(a\left(x,\xi,\frac{x_1}{\eps^\alpha}\right)\right)\psi^\eps_0,\psi^\eps_0\right)
\end{equation}
for symbols $a\in{\cal C}^\infty(\R^5)$ satisfying
\begin{itemize}
\item there exists a compact $K$ of $\R^4$ such that, for all $y\in\R $, $a(\cdot,\cdot,y)\in{\cal C}_0^\infty(K)$,
\item there exists $R_0>0$ and two smooth functions $a(x,\xi,+\infty)$ and $a(x,\xi,-\infty)$  such that for $\mid y\mid>R_0$,
$a(x,\xi,y)=a(x,\xi,{\rm sgn}(y)\infty)$.
\end{itemize}
According to \cite{Mi} (see also \cite{FG}), a two-scale Wigner
measure associated to the concentration of $(\psi^\eps_0)_{\eps >0}$ on
$\{x_1=0\}$ at the scale $\eps^\alpha$ is a nonnegative Radon measure $\nu$
on $\R^4\times \overline\R$ where
$\overline\R=(\R\cup\{-\infty,+\infty\})$ such that, for some sequence $(\eps _k)_k$ tending to $0$, $\mbox{lim}_k\, I^{\eps _k}(a)=\langle a,\nu\rangle $. One then recovers
$\mu$ by projection of $\nu$ on $\R^{2d}$ through
\begin{equation}\label{eq:mu=int-nu}
 \mu(x,\xi)=\int_{\overline\R}\nu(x,\xi,dy)
\end{equation}
and one can decompose $\nu$ as
$$\nu(x,\xi,y)=\delta(x_1)\otimes\left( \gamma(x_2,\xi,y){\bf 1}_{\mid y\mid<\infty}+\gamma_\infty(x_2,\xi,y)
{\bf 1}_{\mid y\mid=\infty}\right)+\mu(x,\xi)\otimes\delta\left(y-\frac{x_1}{\mid x_1\mid}\infty\right){\bf 1}_{\{x_1\not=0\}}.$$
We choose $(\psi^\eps_0)_{\eps >0}$ such that its concentration above $\{(x;\xi )\in(\R^2)^2; x_1=0\}$ at the scale $\eps^\alpha$ is given by a unique two-scale Wigner measure
\begin{equation}\label{eq:initial-nu}
 \nu(x,\xi,y)={\bf 1}_{\mid y\mid<+\infty}\; \delta(x_1)\otimes \delta(x_2)\otimes \delta(\xi_1)
\otimes \delta(\xi_2-\eta)
\left(\gamma^+(y)\Pi^+(0)+\gamma^-(y)\Pi^-(0)\right)
\end{equation}
where
$\gamma^\pm$ are finite nonnegative Radon measures on $\R$. Of course, one
has $a^\pm=\int_{\R}\gamma^\pm(dy)$. We emphasize that the two-scale Wigner measure $\nu$ is supported on $\R^{4}\times\R$ so that $\nu(\{-\infty,+\infty\})=0$. The following proposition
describes the Wigner measure of~$\psi^\eps(t)$ for
$t\in[0,1/\eta ]$.
\begin{prop}\label{prop:propa-ex}
There exists a sequence $(\eps_k)_k$ tending to $0$ and a family of nonnegative Radon measure $\mu _t$ such that for all $t\in[0,1/\eta ]$ and for all
$a\in{\cal C}_0^\infty(\R^{2d},\C ^{2,2})$,
\begin{equation}\label{eq:def-mu_t}
 \left(\op_{\eps_k}(a)\psi^{\eps_k}(t),\psi^{\eps_k}(t)\right)\td_{k},{+\infty} {\rm tr}\, \int_{\R^{2d}} 
a(x,\xi) \d\mu _t(\d x,\d\xi).
\end{equation}
Besides,
\begin{equation}\label{eq:express-mu_t}
 \mu _t(x,\xi)=\delta(x_1)\otimes \delta(x_2-t\eta )\otimes \delta(\xi_1)\otimes 
\delta(\xi_2-\eta) \left(a^+\Pi^+(0)+a^-\Pi^-(0)\right).
\end{equation}
\end{prop}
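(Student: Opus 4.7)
The natural setting is the two-scale Wigner measure framework of \cite{Mi,FG}. By the standard compactness argument, together with a diagonal extraction in $t$ running over a countable dense subset of $[0,1/\eta]$ and the equicontinuity of $t\mapsto I^\eps_t(a)$ (from the Schr\"odinger equation and the uniform $L^2$-bound on $\psi^\eps(t)$), we extract a sequence $\eps_k\to 0$ such that, for every $t\in[0,1/\eta]$, the family $(\psi^{\eps_k}(t))_k$ admits a unique two-scale Wigner measure $\nu_t$ on $\R^{4}\times\overline\R$. The desired $\mu_t$ in~\eqref{eq:def-mu_t} is then the projection $\mu_t=\int_{\overline\R}\nu_t(\cdot,dy)$, so it suffices to identify $\nu_t$.

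To compute $\nu_t$, I differentiate $I^\eps_t(a)=(\op_\eps(a_\eps)\psi^\eps(t),\psi^\eps(t))$, with $a_\eps(x,\xi):=a(x,\xi,x_1/\eps^\alpha)$, using~\eqref{eq:schroedinger-t}, obtaining $\tfrac{d}{dt}I^\eps_t(a)=\tfrac{1}{i\eps}([\op_\eps(a_\eps),P(\eps)]\psi^\eps(t),\psi^\eps(t))$. Two-scale pseudodifferential calculus handles the $-\eps^2\Delta/2$ part, producing the transport contributions $\xi\cdot\nabla_x a$ and $\eps^{-\alpha}\xi_1\partial_y a$; the latter, singular-looking term vanishes in the limit since $\alpha<1/2$ forces $\xi_1=o(\eps^\alpha)$ on the microlocal support of $\psi^\eps(t)$ (indeed, concentration at scale $\eps^\alpha$ in $x_1$ forces $\xi_1=O(\eps^{1-\alpha})$, so $\eps^{-\alpha}\xi_1=O(\eps^{1-2\alpha})=o(1)$). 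For the potential $M$, both pieces give only negligible contributions along the trajectory $\{(0,t\eta):t\in[0,1/\eta]\}$:
\begin{itemize}
\item The diagonal part $d(x)\mathrm{Id}=\chi(x_2)d_1(x_1)\mathrm{Id}$ vanishes identically for $x_2\in[0,1]$ (since $\chi\equiv 0$ there, together with all its derivatives at interior points), so $[a_\eps,d\,\mathrm{Id}]=0$ and the subprincipal symmetric bracket $\{a_\eps,d\,\mathrm{Id}\}$ vanishes on the (localized) support of $a$;
\item The off-diagonal part $\exp(-1/|x_1|)V(x_2)$ is flat to infinite order at $x_1=0$: on the two-scale support $|x_1|\leq R\eps^\alpha$ one has $\exp(-1/|x_1|)=O(\eps^\infty)$ together with all its $x_1$-derivatives, and standard pseudodifferential estimates yield $\eps^{-1}\op_\eps([a_\eps,\exp(-1/|x_1|)V(x_2)])=O(\eps^\infty)$ on the relevant functions.
\end{itemize}
Passing to the limit along $\eps=\eps_k\to 0$, we obtain, weakly on $\omega\times\R$, the pure free transport equation
\begin{equation*}
\partial_t\nu_t+\xi\cdot\nabla_x\nu_t=0.
\end{equation*}

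Integrating this equation against the initial data~\eqref{eq:initial-nu}, and using that for $t\in[0,1/\eta]$ the trajectory $x_2(t)=t\eta$ remains in $[0,1]$ (so the vanishing of $\chi$ stays in force throughout), we obtain
\begin{equation*}
\nu_t(x,\xi,y)={\bf 1}_{|y|<\infty}\,\delta(x_1)\otimes\delta(x_2-t\eta)\otimes\delta(\xi_1)\otimes\delta(\xi_2-\eta)\bigl(\gamma^+(y)\Pi^+(0)+\gamma^-(y)\Pi^-(0)\bigr).
\end{equation*}
The crucial point is that the projectors stay frozen at $\Pi^\pm(0)$ and do not rotate into $\Pi^\pm(t\eta)$: any such rotation would have to come from the off-diagonal coupling, which is exponentially flat at the crossing. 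Projecting via~\eqref{eq:mu=int-nu}, and using $\int_\R\gamma^\pm(dy)=a^\pm$ together with the preservation of the no-mass-at-infinity property $\nu_t(\{|y|=\infty\})=0$ by free transport, we obtain~\eqref{eq:express-mu_t}.

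The main obstacle lies in the rigorous derivation of the two-scale transport equation, along two related lines: (i)~propagating the support property $\supp\nu_t\subset\{\xi_1=0\}$ together with the transport, so that the a priori singular term $\eps^{-\alpha}\xi_1\partial_y a$ is genuinely controlled (the constraint $\alpha<1/2$ is precisely what makes this work); and (ii)~excluding any subprincipal mode-coupling term—analogous to the $[R_j,\mu_j]$ bracket appearing in Theorem~\ref{theo:propagation}—that would rotate $\Pi^\pm(0)$ into $\Pi^\pm(t\eta)$. Both points are settled by the infinite-order flatness of $\exp(-1/|x_1|)$ at the crossing, which makes the off-diagonal coupling and all its derivatives $O(\eps^\infty)$ uniformly in the two-scale region $|x_1|\leq R\eps^\alpha$; by contrast, a coupling vanishing only to finite order would generically produce a non-trivial matricial evolution and invalidate the frozen-projector conclusion.
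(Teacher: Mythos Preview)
Your approach is essentially the paper's: introduce a two-scale Wigner measure $\nu_t$, derive a transport equation for it, solve explicitly, and project via \eqref{eq:mu=int-nu}. Two technical points where the paper is tighter:

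\emph{The singular $\eps^{-\alpha}\xi_1\partial_y a$ term.} Your heuristic ``concentration at scale $\eps^\alpha$ in $x_1$ forces $\xi_1=O(\eps^{1-\alpha})$'' is not a valid general argument---semiclassical uncertainty gives a \emph{lower} bound on $\xi_1$-spreading from $x_1$-concentration, not an upper one, and there is no a priori reason the evolved family $\psi^\eps(t)$ saturates it. The paper makes the $O(\eps^{1-2\alpha})$ bound rigorous by conjugating with the unitary rescaling $Tf(x)=\eps^{\alpha/2}f(x_1\eps^\alpha,x_2)$: this turns both $\op_\eps(a_\eps)$ and $P(\eps)$ into operators with symbols in standard $\op_1$-classes, and the extra term appears as $\eps^{1-2\alpha}T^*\op_1\bigl(\xi_1\partial_y a(x_1\eps^\alpha,x_2,\xi_1\eps^{1-\alpha},\xi_2\eps,x_1)\bigr)T$, manifestly $o(1)$ on bounded $L^2$ families.

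\emph{The $d$-term.} You drop the contribution of $d(x){\rm Id}$ by restricting the test function to $\{x_2\in(0,1)\}$, but this presupposes $\supp\nu_t\subset\{x_2\in[0,1]\}$, which is only known \emph{after} solving the transport equation. The paper keeps this term and obtains $\partial_t\nu_t+\xi\cdot\nabla_x\nu_t-\nabla d\cdot\nabla_\xi\nu_t=0$; the characteristics of this equation coincide with the $\pm$ trajectories on $[0,1/\eta]$ (where $\chi=\chi'=0$), so the solution is indeed supported on the trajectory and your simplified equation holds there a posteriori.

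Neither point is fatal, and your identification of the infinite-order flatness of $e^{-1/|x_1|}$ as the mechanism that freezes the projectors at $\Pi^\pm(0)$ is exactly the crux of the example.
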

\begin{rem} 
Let $\Omega =\{x\in\R^2 ; |x|<2\}$ and $\theta\in {\cal C}^\infty _0(\R^2)$ such that $\theta =1$ near $\Omega$. 
The family $(\psi^\eps)_{\eps >0}$ satisfies \eqref{eq:psi} with $Q(x;\xi )=(|\xi |^2/2+d(x)){\rm Id}+\theta (x)M(x)$. 
Except the vanishing condition on the $B_j$, the assumptions of Theorem~\ref{theo:propagation} are satisfied. 
\end{rem}
\begin{rem} We deduce from \eqref{eq:express-mu_t} that the measures $\mu_t^\pm=\Pi^\pm\mu _t\Pi^\pm$ satisfy
$$\mu^\pm_t(x,\xi)=\delta(x_1)\otimes \delta(x_2-t\eta )\otimes \delta(\xi_1)\otimes \delta(\xi_2-\eta) \Pi^\pm(t\eta )
\left(a^+\Pi^+(0)+a^-\Pi^-(0)\right)\Pi^\pm (t\eta ).$$
We observe that, depending on the choice of  $k$, it is possible to have energy transfers between the modes.
More precisely, in view of \eqref{explicitPi}, we have the following facts:
\begin{itemize}
\item For $k\in \Z+1/2$, between $t=0$ and $t=1/\eta $, the mass $a^+$ is transfered from the plus mode to the minus one. The conclusion of Theorem~\ref{theo:propagation} is not true. 
\item For $k\notin \Z\cup (\Z+1/2)$, between $t=0$ and $t=1/\eta $, there is some partial transfer from one mode to the other. The conclusion of Theorem~\ref{theo:propagation} is false. 
\item For $k=0$, the conclusion of Theorem~\ref{theo:propagation} holds true since the $B_\pm$ defined in \eqref{def:Bj}
vanish on the crossing. 
\end{itemize}
\end{rem}

$ $

\begin{proof}
The proof of this proposition relies on the analysis of the two-scale Wigner measure associated
with~$\psi^\eps(t)$ which is given by the following lemma.

\begin{lem}
There exists a sequence $(\eps_k)$ and a family of nonnegative Radon
measure~$\nu_t$ such that for all~$t\in[0,1/\eta ]$ and for
all~$a\in{\cal C}_0^\infty(\R^{2d+1},\C ^{2,2})$,
$$\left(\op_{\eps_k}\left(a\left(x,\xi,\frac{x_1}{\eps^\alpha}\right)\right)\psi^{\eps_k}(t),\psi^{\eps_k}(t)\right)\td_{k},{+\infty}
{\rm tr}\, \int_{\R^{2d}\times \R} a(x,\xi,y) \d\nu_t(\d x,\d\xi,\d y).$$
Besides,  $t\mapsto\nu_t$ is continuous and  $\nu_t$ satisfies the
transport equation
\begin{equation}\label{eq:transport-nu}
\partial_t\nu_t+\xi\cdot\nabla_x\nu_t-\nabla _xd\cdot\nabla_\xi\nu_t=0 
\end{equation}
with initial condition $\nu _0=\nu$, given by \eqref{eq:initial-nu}. 
\end{lem}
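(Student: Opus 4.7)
\ni\textbf{Proof plan.} The idea is to extract the family $\nu_t$ by an Ascoli-type compactness argument applied to
\[
J^\eps(t,a):=\bigl(\op_\eps(a_\eps)\psi^\eps(t),\psi^\eps(t)\bigr),\qquad a_\eps(x,\xi):=a\bigl(x,\xi,x_1/\eps^\alpha\bigr),
\]
for $a$ in the admissible class described before \eqref{def-I-eps}, and then derive the transport equation by passing to the limit in $\partial_t J^\eps$. Unitarity of the Schr\"odinger propagator on \eqref{eq:schroedinger-t} gives $\|\psi^\eps(t)\|_{L^2}=\|\psi^\eps_0\|_{L^2}$ uniformly in $t,\eps$, and the standard $L^2$-continuity of two-scale pseudodifferential operators (cf.~\cite{Mi,FG}) yields $\sup_{\eps,t}|J^\eps(t,a)|<\infty$ for each admissible~$a$.

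Writing $P(\eps)=-(\eps^2/2)\Delta\cdot{\rm Id}+d(x){\rm Id}+W(x)$ with $W(x):=\exp(-1/|x_1|)V(x_2)$, I compute
\[
\partial_t J^\eps(t,a)=\frac{i}{\eps}\bigl([P(\eps),\op_\eps(a_\eps)]\psi^\eps(t),\psi^\eps(t)\bigr).
\]
The commutator with $-(\eps^2/2)\Delta\cdot{\rm Id}$ produces at leading order $\op_\eps(\xi\cdot\nabla_x a_\eps)$ and the commutator with $d(x){\rm Id}$ gives $-\op_\eps(\nabla_x d\cdot\nabla_\xi a)$. These supply the two transport terms of the claimed equation; the formally singular piece $(\xi_1/\eps^\alpha)\partial_y a$ contained in $\xi\cdot\nabla_x a_\eps$ will be harmless in the limit because the classical trajectory calculation at the start of this section shows that $\xi_1(s)\equiv 0$ on $[0,1/\eta]$, and by a standard Wigner propagation argument this localization is inherited by the limiting measure, which is thus supported on $\{\xi_1=0\}$.

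The main obstacle is the commutator with the off-diagonal perturbation $W$: since $V(x_2)$ does not commute with generic matrix symbols, the leading term
\[
\frac{i}{\eps}\op_\eps\bigl([W,a_\eps]\bigr)
\]
is \emph{a priori} of size $\eps^{-1}$. The key observation is that $W$ is microlocally negligible at the scale $\eps^\alpha$: on $\{|x_1|\le R\eps^\alpha\}$ one has $|W(x)|\le\exp(-R^{-1}\eps^{-\alpha})$, so that, after division by $\eps$, the contribution of this region is $o(\eps^N)$ for every $N$. On $\{|x_1|\ge R\eps^\alpha\}$ with $R\ge R_0$, the two-scale symbol $a_\eps$ reduces to its limit at infinity $a(x,\xi,\pm\infty)$; the residual commutator can then be analyzed with standard one-scale Wigner calculus and contributes nothing in the limit, because the standard Wigner measure $\mu_t$ of $(\psi^\eps(t))_\eps$ remains supported on $\{x_1=0\}$ throughout $[0,1/\eta]$ (again by the trajectory computation and standard propagation).

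Combining these bounds gives $\sup_{t\in[0,1/\eta]}|\partial_t J^\eps(t,a)|\le C_a$ independent of $\eps$, so the family $\{J^\eps(\cdot,a)\}_\eps$ is uniformly bounded and equicontinuous on $[0,1/\eta]$. Ascoli's theorem together with a diagonal extraction over a countable dense family of admissible symbols produces a subsequence $(\eps_k)$ and a weakly-$*$ continuous family $t\mapsto\nu_t$ of nonnegative Radon measures satisfying the stated convergence. Passing to the limit in $J^\eps(t,a)-J^\eps(0,a)=\int_0^t\partial_s J^\eps(s,a)\,ds$ yields the transport equation $\partial_t\nu_t+\xi\cdot\nabla_x\nu_t-\nabla_x d\cdot\nabla_\xi\nu_t=0$, and the initial condition $\nu_{|t=0}=\nu$ is inherited from the choice of $(\psi^\eps_0)$.
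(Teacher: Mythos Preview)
Your argument contains a genuine circularity that breaks the proof. For the term $\eps^{-\alpha}\xi_1(\partial_y a)(x,\xi,x_1/\eps^\alpha)$ you assert it is ``harmless in the limit'' because the limiting two-scale measure is supported on $\{\xi_1=0\}$; and for the commutator with $W$ on $\{|x_1|\ge R\eps^\alpha\}$ you invoke $\supp\mu_t\subset\{x_1=0\}$. In both cases you are using properties of the limiting objects to obtain the uniform bound $\sup_t|\partial_t J^\eps(t,a)|\le C_a$ that is needed \emph{before} Ascoli can be applied to extract those very objects. Worse, even granting $\supp\nu_t\subset\{\xi_1=0\}$, this would only give $\bigl(\op_\eps(\xi_1(\partial_y a)_\eps)\psi^\eps,\psi^\eps\bigr)=o(1)$, and $\eps^{-\alpha}\cdot o(1)$ need be neither bounded nor $o(1)$. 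Note also that ``standard Wigner propagation'' for $\mu_t$ is unavailable here: this example is precisely the degenerate situation in which Theorem~\ref{theo:propagation} does not apply.

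The paper eliminates both difficulties by conjugating with the unitary dilation $T:f\mapsto\eps^{\alpha/2}f(\eps^\alpha x_1,x_2)$. Two things happen. First, since the lemma only concerns $a\in{\cal C}_0^\infty(\R^{2d+1})$, the rescaled symbol $a(x_1\eps^\alpha,x_2,\xi_1\eps^{1-\alpha},\xi_2\eps,x_1)$ has \emph{compact support in the new $x_1$}, on which $e^{-1/(\eps^\alpha|x_1|)}\le e^{-C^{-1}\eps^{-\alpha}}$; hence $\eps^{-1}$ times the $W$-commutator is $o(\eps^N)$ with no appeal to $\mu_t$. (Your splitting into $|x_1|\lessgtr R\eps^\alpha$ is in fact unnecessary for compactly supported $a$: the outer region does not meet $\supp a_\eps$.) Second, the commutator with $-\tfrac{\eps^{2(1-\alpha)}}{2}\partial_{x_1}^2$ produces the $\partial_y a$ contribution with an explicit prefactor $\eps^{1-2\alpha}$; since $\alpha\in(0,1/2)$ this factor tends to $0$, so the term disappears from the limiting equation without any knowledge of $\supp\nu_t$. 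This rescaling step is the missing idea in your proof.
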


\ni Solving the transport equation \eqref{eq:transport-nu}, we obtain
$$\nu_t(x,\xi,y)=\delta(x_1)\otimes\delta(x_2-t\eta )\otimes \delta(\xi_1)\otimes \delta(\xi_2-\eta)
\left(\gamma^+(y)\Pi^+(0)+\gamma^-(y)\Pi^-(0)\right).$$
For each $t\in[0,1/\eta ]$, the sequence $(\psi^{\eps_k}(t))_k$ has a unique Wigner measure $\mu _t$ given by \eqref{eq:mu=int-nu} with $\mu ,\nu$ replaced by $\mu _t,\nu _t$ respectively. This yields \eqref{eq:def-mu_t} and~\eqref{eq:express-mu_t}. 
\qed
\end{proof}

$ $

\noindent Let us now prove the Lemma.

\begin{proof} Consider a symbol $a\in{\cal C}_0^\infty(\R^{2d+1},\C ^{2,2})$ and let us analyze the quantity
$$I^\eps_a(t)=\left(\op_\eps\left(a\left(x,\xi,\frac{x_1}{\eps^\alpha}\right)\right)\psi^\eps(t),\psi^\eps(t)\right).$$
By \eqref{eq:schroedinger-t}, 
$$\frac{\d}{\d t} I_{\eps,a}(t)= \left(K_\eps\psi^\eps(t),\psi^\eps(t)\right)\hspace{.2cm}\mbox{with}\hspace{.2cm}K_\eps=\frac{1}{i\eps}\left[\op_\eps\left(a\left(x,\xi,\frac{x_1}{\eps^{\alpha}}\right)\right),-\frac{\eps^2}{2}\Delta +M(x)\right].$$
We use the scaling operator $T$
defined by
$$\forall f\in L^2(\R^2),\;\; Tf(x)=\eps^{\alpha/2}f(x_1\eps^\alpha, x_2).$$
It is unitary on $L^2(\R^2)$. In ${\cal L}\left(L^2(\R^d)\right)$, $K_\eps$ can be written as 
$$\frac{T^*}{i\eps}\left[\op_1\left(a(x_1\eps^\alpha, x_2,\xi_1
\eps^{1-\alpha},\xi_2\eps,x_1)\right)\;,\;-\frac{\eps^{2(1-\alpha)}}{2}\partial_{x_1}^2-\frac{\eps^2}{2}\partial_{x_2}^2+d(x_1\eps^\alpha,x_2)+{\rm
e}^{-1/(\eps^\alpha\mid x_1\mid)}V(x_2)\right]T$$
\begin{eqnarray}
&=&\op_\eps\left(\xi\cdot
(\partial_xa)(x,\xi,x_1\eps^{-\alpha})-\nabla d(x)\cdot(\partial_\xi
a)(x,\xi,x_1\eps^{-\alpha})\right)\label{eq:rescaled-calculus}\\
&&+\eps^{1-2\alpha}T^*\op_1\left(\xi_1\partial_y
a(x_1\eps^\alpha,x_2,\xi_1\eps^{1-\alpha},\xi_2\eps, x_1)\right)T
+o(1). \nonumber
\end{eqnarray}
Here we used that there exists a constant
$C$ such that on the support of $a(x_1\eps^\alpha,
x_2,\xi_1 \eps^{1-\alpha},\xi_2\eps,x_1)$, we have $\mid x_1\mid
\leq C$, therefore $\frac{1}{\eps }{\rm e}^{1/(x_1\eps^\alpha)}$
goes to $0$ uniformly, as $\eps$ goes to $0$.\\
This implies that $\left(K_\eps\psi^\eps(t),\psi^\eps(t)\right)$ is
uniformly bounded. Therefore, by Ascoli theorem, considering a
dense subset of ${\cal C}_0^\infty(\R^{2d+1},\C ^{2,2})$, and then arguing by
diagonal extraction, one can find a sequence $(\eps_k)_k$ such that, for all~$a \in{\cal C}_0^\infty(\R^{2d+1},\C ^{2,2})$, $I_{\eps_k,a}(t)$ has
a limit as $k$ goes to $+\infty$, uniformly w.r.t. $t$. Besides, observing that
$$(K\psi^\eps(t),\psi^\eps(t))\td_\eps, 0 \langle\xi\cdot\partial_x a(x,\xi,y) -\nabla d(x)\cdot \partial_\xi a(x,\xi,y), \nu_t\rangle $$
by \eqref{eq:rescaled-calculus}, we obtain the transport equation \eqref{eq:transport-nu}.\qed
\end{proof}

$ $

%%%%%%%%%%%%%%%%%%%%%%%%%%%%%%%%%%%%%%%%%%%%%%%%%%%%%%%%%%%%%%%%%%%%%%%%%%%%%%%%%%%%%%%%%%%%%%%%

\appendix

\section{Appendix: Normal form for non degenerated codimension~$1$ crossing}
\label{A:normal_form}

We present here a normal form in the spirit of \cite{[CdV]}, \cite{[CdV2]} and \cite{Fe03}. We consider
a generic class of non-degenerated crossings in two matricial dimensions.
We take a $\C^{2,2}$-valued symbol $Q$  and decompose it into a scalar part plus a trace
free matrix:
$$Q(x,\xi)=\phi_0(x,\xi){\rm Id}+\begin{pmatrix}\phi_1 (x,\xi)&
\phi_2(x,\xi)+i\phi_3(x,\xi)\\  \phi_2(x,\xi)-i\phi_3(x,\xi) &
-\phi_1(x,\xi)\end{pmatrix}.$$
Here the functions $\phi _0,\cdots ,\phi _3$ are real-valued and ${\cal C}^\infty$. The eigenvalues
are the functions
$$\lambda^\pm(x,\xi)=\phi_0(x,\xi)\pm\sqrt{\phi_1(x,\xi)^2 +\phi_2(x,\xi)^2 +\phi_3(x,\xi)^2  }.$$
The
crossing set is the set $$\Gamma =\{(x,\xi)\in T^*\R^d;\;
\phi_1(x,\xi)=\phi_2(x,\xi)=\phi_3(x,\xi)=0\}.$$ We assume that, near
some point $(x^*,\xi^*)\in \Gamma$, $\Gamma $ is a codimension one submanifold
of $T^*\R^d$ given by some equation $\gamma(x,\xi)=0$ with $\nabla\gamma\not=0$ on $\Gamma\cap\Omega_0$
where $\Omega_0$ is an open neighborhood of $(x^*,\xi^*)$. \\
Following \cite{[CdV]}, \cite{[CdV2]} and \cite{FG03},
we say that this codimension~$1$ crossing is {\it non degenerated} in $(x^*,\xi^*) $ if
$$\left(\{\phi_0,\phi_1\},\{\phi_0,\phi_2\},\{\phi_0,\phi_3\}\right)(x^*,\xi^*)\not=(0,0,0).$$

\begin{prop} If
the codimension~$1$ crossing is non degenerated in $(x^*,\xi^*)$,
then
\begin{equation}\label{eq:gen}
\{\phi_0,\gamma\}(x^*,\xi^*)\not=0.
\end{equation}
and the Hamiltonian vector fields $H_{\lambda^\pm}(x^*,\xi^*)$ are transverse to the crossing set.
\end{prop}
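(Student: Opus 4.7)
The plan is to exploit that the codimension-1 set $\Gamma$ is simultaneously cut out by the single equation $\gamma=0$ (with non-vanishing differential) and by the three equations $\phi_1=\phi_2=\phi_3=0$. This forces each $\phi_i$ to be divisible by $\gamma$, from which both assertions follow by elementary Poisson bracket calculations.

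\medskip

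\textbf{Step 1 (factorization).} I would first observe that, since $\Gamma\cap\Omega_0=\{\gamma=0\}$ with $\nabla\gamma\neq 0$, Hadamard's lemma applied near $(x^*,\xi^*)$ yields smooth functions $g_1,g_2,g_3$ on a neighborhood of $(x^*,\xi^*)$ such that
\begin{equation*}
\phi_i(x,\xi)=g_i(x,\xi)\,\gamma(x,\xi),\qquad i=1,2,3.
\end{equation*}

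\textbf{Step 2 (proof of \eqref{eq:gen}).} Differentiating using the Leibniz rule for the Poisson bracket,
\begin{equation*}
\{\phi_0,\phi_i\}=g_i\,\{\phi_0,\gamma\}+\gamma\,\{\phi_0,g_i\}.
\end{equation*}
Evaluating at $(x^*,\xi^*)\in\Gamma$ (where $\gamma$ vanishes) gives
\begin{equation*}
\{\phi_0,\phi_i\}(x^*,\xi^*)=g_i(x^*,\xi^*)\,\{\phi_0,\gamma\}(x^*,\xi^*),\qquad i=1,2,3.
\end{equation*}
The non-degeneracy assumption says the left-hand sides do not all vanish. Therefore $\{\phi_0,\gamma\}(x^*,\xi^*)\neq 0$ (and, incidentally, $(g_1,g_2,g_3)(x^*,\xi^*)\neq 0$), which proves \eqref{eq:gen}.

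\medskip

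\textbf{Step 3 (transversality of $H_{\lambda^\pm}$).} Setting $G:=\sqrt{g_1^{\,2}+g_2^{\,2}+g_3^{\,2}}$, which is smooth and, by Step~2, strictly positive near $(x^*,\xi^*)$, we have
\begin{equation*}
\sqrt{\phi_1^{\,2}+\phi_2^{\,2}+\phi_3^{\,2}}=|\gamma|\,G,
\end{equation*}
so that $\lambda^\pm=\phi_0\pm|\gamma|G$ off $\Gamma$. Compute in $\{\gamma\neq 0\}$:
\begin{equation*}
H_{\lambda^\pm}\gamma=\{\lambda^\pm,\gamma\}=\{\phi_0,\gamma\}\pm G\,\{|\gamma|,\gamma\}\pm|\gamma|\,\{G,\gamma\}
=\{\phi_0,\gamma\}\pm|\gamma|\,\{G,\gamma\},
\end{equation*}
since $\{|\gamma|,\gamma\}=\pm\{\gamma,\gamma\}=0$ on each side of $\Gamma$. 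Letting $(x,\xi)\to(x^*,\xi^*)$ from either side of $\Gamma$ yields
\begin{equation*}
H_{\lambda^\pm}\gamma\;\longrightarrow\;\{\phi_0,\gamma\}(x^*,\xi^*)\neq 0,
\end{equation*}
so $H_{\lambda^\pm}$ extends continuously across $\Gamma$ at $(x^*,\xi^*)$ and is transverse to $\Gamma$ there.

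\medskip

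The only subtle point is the non-smoothness of $\lambda^\pm$ on $\Gamma$; this is handled by the identity $\sqrt{\sum\phi_i^{\,2}}=|\gamma|G$, which reduces the non-smoothness to that of $|\gamma|$ and lets the bracket $\{|\gamma|,\gamma\}$ vanish away from $\Gamma$. Everything else is an immediate consequence of the factorization $\phi_i=g_i\gamma$.
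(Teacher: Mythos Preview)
Your proof is correct and follows essentially the same route as the paper: the factorization $\phi_i=g_i\gamma$ (the paper writes $\phi_j=\gamma u_j$), then the Leibniz rule to get $\{\phi_0,\phi_i\}(x^*,\xi^*)=g_i(x^*,\xi^*)\{\phi_0,\gamma\}(x^*,\xi^*)$, and finally the observation that the radical contributes nothing to $H_{\lambda^\pm}\gamma$ at the crossing point. Your Step~3 is in fact a bit more careful than the paper's version, which simply asserts that $\{\sqrt{\phi_1^2+\phi_2^2+\phi_3^2},\gamma\}(x^*,\xi^*)=0$ ``as $\gamma(x^*,\xi^*)=0$ and $\phi_j=\gamma u_j$'' without spelling out the $|\gamma|$ issue or the limiting argument.
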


\begin{proof}
The fact that $\gamma=0$ is an equation of the crossing set yields
that there exist a neighborhood $\Omega_1$ of $(x^*,\xi^*)$ and
smooth functions~$u_j(x,\xi)$, $1\leq j\leq 3$ such that
$$\phi_j(x,\xi)=\gamma(x,\xi) u_j(x,\xi),\;\;\forall j\in\{1,2,3\},\;\;\forall (x,\xi)\in\Omega_1.$$
In view of $\gamma(x^*,\xi^*)=0$, we obtain
$$\forall j\in\{1,2,3\},\;\;\{\phi_0,\phi_j\}(x^*,\xi^*)=u_j(x^*,\xi^*) \{\phi_0,\gamma\}(x^*,\xi^*).$$
Therefore, the quantities $u_j(x^*,\xi^*)$ may not be all equal to~$0$ and $\{\phi_0,\gamma\}(x^*,\xi^*)\not=0$.

\ni Besides, the vector fields $H_{\lambda^\pm}$ are transverse to the crossing set if and only if
$ H_{\lambda^\pm}\gamma(x^*,\xi^*)\not=0.$ We observe that
\begin{eqnarray*}
H_{\lambda^\pm}\gamma(x^*,\xi^* ) & = &  \{\lambda^\pm,\gamma\}(x^*,\xi^*)\\
& = & \{\phi_0,\gamma\}(x^*,\xi^*) \pm\left\{\sqrt{\phi_1^2+\phi_2^2+\phi_3^2},\gamma\right\}
(x^*,\xi^*).
\end{eqnarray*}
As $\gamma(x^*,\xi^*)=0$ and $\phi_j=\gamma u_j$, $1\leq
j\leq 3$, we get
$$  H_{\lambda^\pm}\gamma(x^*,\xi^* )
= \{\phi_0,\gamma\}(x^*,\xi^*)\not=0 ,$$
whence the proposition.\qed
\end{proof}

$ $

Before stating the main result,  let us recall some basic facts about canonical transforms and
 Fourier integral operators.  The phase space $\R^{2d}$ has the structure of the
cotangent  space $T^*\R^{d}$ which is a symplectic space endowed with  the $2$-form
$\omega=\d\xi\wedge\d x$. A canonical transform is a
 local change of symplectic
 coordinates, i.e. a local diffeomorphism which preserves the symplectic structure.
 One can  associate with a canonical transform an operator $U$ called Fourier Integral Operator, which is  a unitary bounded operator  of
 $L^2(\R^{d})$ satisfying convenient properties that we explain now.  The reader will find in~\cite{[Ro]} a complete analysis
of Fourier Integral Operator, the presentation chosen here is the
one of~\cite{FG}. \\

Once given the canonical transform $\kappa$ in some open set
$\Omega$, one constructs a ${\cal C}^1$~path
$\delta\mapsto\kappa(\delta)$, $\delta\in[0,1]$ linking ${\rm Id}$
to $\kappa$ (see \cite[Lemma~1]{FG}). The fact that $\kappa(\delta)$ preserves the
symplectic structure of $T^*\R^{d}$ yields that $\frac{\d}{\d
\delta}\,\kappa(\delta)\circ \kappa(\delta)^{-1}$ is a Hamiltonian
 vector field on $T(T\R^{d})$ above~$\Omega$. Therefore, there exists a smooth
function $f(\delta)$ such that $\kappa(\delta)$ solves
$$\frac{\d}{\d \delta}\,\kappa(\delta)=H_{f(\delta)} \kappa(\delta),\;\;\kappa(0)={\rm
Id},\;\;\kappa(1)=\kappa.$$

\ni Define the family of unitary operators $U^\eps(\delta)$ by %TD: j'ai numerote cette equation
\begin{equation}
\label{eq:U}
{i\eps}\frac{\d}{\d \delta}
U^\eps(\delta)=\op_\eps(f(\delta))U^\eps(\delta),\;\;U^\eps(0)={\rm
Id} ,\;\;\delta\in[0,1].
\end{equation} 
Then, we are interested in the operator
$U$ defined by
$$U:=U^\eps(1).$$ This operator  satisfies the following formula
known as Egorov's Theorem (see Section 2.2 in \cite{FG})%TD: ici j'ai remplace 2d+2 par 2d
\begin{equation}\label{eq:Egorov}\forall
a\in{\cal C}^\infty_0(\R^{2d}),\;\;
U^*\op_\eps\left(a\right)U=\op_\eps(a\circ\kappa)+O(\eps^2)\;\;{\rm
in}\;\;{\cal L}(L^2(\R^{d})) . \end{equation} \ni The operator
$U$ is a {\it Fourier Integral Operator} associated with
$\kappa$.

\ni Observe that finding the operator $U$ or finding the function
$f$ are equivalent questions; we will use this fact in the proof
of the  following theorem.

\begin{theorem}\label{fornor}
If the matrix $Q$ presents a non degenerated codimension $1$
crossing in $(x^*,\xi^*)$, then for all $N\in\N$, there exist a
canonical transform $\kappa_N$ from a neighborhood of
$(x^*,\xi^*)$ into a neighborhood $\Omega$ of $0$
$$\kappa_N(x,\xi)=(s,z,\sigma,\zeta),\;\;(s,\sigma)\in\R^2, \;\;(z,\zeta)\in\R^{d-1}\times\R^{d-1}$$
 smooth $\C$-valued  functions $(\gamma_j)_{1\leq j\leq N}$ and  smooth matrices $(B_j(z,\zeta))_{0\leq j\leq N}$
 such that if $U_N$ is a Fourier integral operator associated with $\kappa_N $ then for all $a\in{\cal C}_0^\infty(\Omega)$,
 in ${\cal L}\left(L^2(\R^d)\right)$,
$$\displaylines{\qquad\op_\eps(a)\Bigl(U_N^*\op_\eps(B_{N}^\eps )^*\op_\eps (Q)\op_\eps  (B_{N}^\eps)U_N\Bigr)\hfill\cr\hfill
=\op_\eps(a)\op_\eps\left(\alpha_0\sigma\,{\rm Id}+\begin{pmatrix}
s & \eps\gamma_N^\eps(z,\zeta) \\\eps \ol\gamma_N^\eps(z,\zeta) &
-s\end{pmatrix}\right)+O(\eps^{N+1}),\qquad\cr
B_N^\eps=B_0+\eps B_1+\cdots+\eps^NB_N,\;\;\gamma_N^\eps=\gamma_1+\eps\gamma_2\cdots+\eps^{N-1}\gamma_N,\cr}$$
where $\alpha_0={\rm sgn}\bigl(\{\phi_0,\gamma\}(x^*,\xi^*)\bigr)$
 Besides, $B_0^*B_0=\lambda^{-1}{\rm Id}$ with  $\lambda(x^*,\xi^*)\not=0$.
\end{theorem}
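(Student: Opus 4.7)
The plan is to follow the standard two-stage normal form strategy: first achieve the principal-order form $\alpha_0 \sigma \mathrm{Id} + s \sigma_3$ (writing $\sigma_3 := \begin{pmatrix} 1 & 0 \\ 0 & -1 \end{pmatrix}$) by combining a matrix gauge transform with a canonical change of coordinates, then iteratively cancel subprincipal corrections by solving cohomological equations.

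For the principal-order reduction I would proceed in two steps. Since $\phi_1, \phi_2, \phi_3$ all vanish on the smooth crossing $\{\gamma = 0\}$, Hadamard's lemma gives $\phi_j = \gamma u_j$ with smooth $u_j$, and the identity $\{\phi_0, \phi_j\}(x^*, \xi^*) = u_j(x^*, \xi^*)\{\phi_0, \gamma\}(x^*, \xi^*)$ combined with non-degeneracy forces $\lambda := \sqrt{u_1^2 + u_2^2 + u_3^2}$ to be positive and smooth near $(x^*, \xi^*)$. A smooth unitary $V(x, \xi)$ diagonalizes the unit trace-free part: $V^*(Q - \phi_0 \mathrm{Id}) V = \gamma \lambda \sigma_3$. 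Setting $B_0 := V \lambda^{-1/2}$ gives the advertised property $B_0^* B_0 = \lambda^{-1} \mathrm{Id}$ together with
$$B_0^* Q B_0 = (\phi_0/\lambda) \mathrm{Id} + \gamma \sigma_3.$$
Next, after multiplying $\gamma$ by a suitable non-vanishing smooth factor (which leaves the crossing set and the sign $\alpha_0$ unchanged), one arranges $\{\phi_0/\lambda, \gamma\}(x^*, \xi^*) = \alpha_0$, so that Darboux's theorem produces canonical coordinates $(s, z, \sigma, \zeta)$ near $(x^*, \xi^*)$ with $\phi_0/\lambda = \alpha_0 \sigma$ and $\gamma = s$. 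Taking $U^{(0)}$ to be a FIO associated with this canonical transform $\kappa^{(0)}$, Egorov's theorem \eqref{eq:Egorov} gives $(U^{(0)})^* \op_\eps(B_0^* Q B_0) U^{(0)} = \op_\eps(\alpha_0 \sigma \mathrm{Id} + s \sigma_3) + O(\eps)$, which is the $N=0$ version of the claim.

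For the inductive improvement I would assume by induction that we have $\kappa_{N-1}, U_{N-1}, B_0, \ldots, B_{N-1}$ and $\gamma_1, \ldots, \gamma_{N-1} \in C^\infty$ depending only on $(z, \zeta)$ such that the conjugated operator matches the target normal form modulo a term $\eps^N \op_\eps(R_N) + O(\eps^{N+1})$ for some smooth hermitian matrix symbol $R_N$. One then seeks a matrix correction $B_N$, a scalar generator $f_N$ producing a small refinement $U_N := U_{N-1} \cdot V_N$ of the FIO (with $V_N$ associated with a canonical perturbation generated by $\eps^{N-1} f_N$), and the new coefficient $\gamma_N(z, \zeta)$, arranged so as to cancel $R_N$ modulo $\begin{pmatrix} 0 & \gamma_N \\ \bar\gamma_N & 0 \end{pmatrix}$. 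Collecting $\eps^N$-contributions via the Moyal expansion and Egorov correction yields a cohomological equation of schematic form
$$\alpha_0 \partial_s B_N + [s \sigma_3, B_N] + \alpha_0 (\partial_s f_N)\, \mathrm{Id} = R_N - \begin{pmatrix} 0 & \gamma_N \\ \bar\gamma_N & 0 \end{pmatrix}.$$
The scalar trace part is absorbed by setting $\alpha_0 \partial_s f_N = \tfrac{1}{2} \mathrm{tr}(R_N)$ (integration in $s$); the trace-free diagonal part is absorbed similarly by the diagonal of $B_N$. The off-diagonal entries reduce to scalar first-order ODEs $\alpha_0 \partial_s b_N + 2 s b_N = r_N - \gamma_N$, solvable for smooth $b_N$ via the integrating factor $e^{s^2/\alpha_0}$.

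The main obstacle will be to enforce that $\gamma_N$ depend on $(z, \zeta)$ only, rather than on all the conserved quantities $(\sigma, z, \zeta)$ of the principal flow $H_{\alpha_0 \sigma}$. The extra reduction is achieved by exploiting the residual freedom in the construction of $\kappa_N$: one absorbs any $\sigma$-dependence of the natural candidate (the $\sigma$-average of $r_N|_{s=0}$) into a further symplectic change of coordinates fixing the principal-order normal form, equivalent to adjusting the auxiliary coordinates $(z, \zeta)$ by a $\sigma$-dependent shift compatible with the canonical structure. Bookkeeping this carefully at each inductive step, then assembling the corrections $B_1, \ldots, B_N$ into $B_N^\eps$, the generators $f_1, \ldots, f_N$ into the refined $U_N$ with canonical transform $\kappa_N$, and the scalars $\gamma_1, \ldots, \gamma_N$ into $\gamma_N^\eps$, and finally truncating the Moyal/Egorov expansions at order $N$, produces the claimed equality modulo $O(\eps^{N+1})$.
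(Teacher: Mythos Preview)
Your two–stage architecture matches the paper's, but both stages have real gaps.

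In the principal step, arranging $\{\phi_0/\lambda,\gamma\}(x^*,\xi^*)=\alpha_0$ only at the base point is not enough for Darboux to yield \emph{both} $\phi_0/\lambda=\alpha_0\sigma$ and $\gamma=s$: that requires the bracket to equal a constant identically near $(x^*,\xi^*)$. The paper fixes this via H\"ormander's Lemma~21.3.4, producing a positive $\lambda$ with $\{\lambda\phi_0,\lambda\gamma\sqrt{u_1^2+u_2^2+u_3^2}\}\equiv 1$ and then setting $\sigma=\lambda\phi_0$, $s=\lambda\gamma\sqrt{u_1^2+u_2^2+u_3^2}$, $B_0=\sqrt\lambda\,B$.

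More seriously, your cohomological equation is mis--derived. Adding $\eps^N B_N$ to $B_{N-1}^\eps$ contributes at order $\eps^N$ only the matrix term $B_N^*Q_0+Q_0B_N=\sigma(B_N^*+B_N)+s(B_N^*\sigma_3+\sigma_3 B_N)$; Moyal corrections from $B_N$ are pushed to order $\eps^{N+1}$, so there is no $\alpha_0\partial_s B_N$ term. On the FIO side the Egorov contribution is $\{f_N,Q_0\}=-\alpha_0\partial_s f_N\,\mathrm{Id}+\partial_\sigma f_N\,\sigma_3$, so a $\partial_\sigma f_N$ piece is missing from your equation (and your generator should be $\eps^{N}f_N$, not $\eps^{N-1}f_N$). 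With the correct equation the unknowns are coupled in $(s,\sigma)$ and your ODE--in--$s$ scheme with integrating factor $e^{s^2/\alpha_0}$ no longer applies; the $\sigma$--dependence of $\gamma_N$ is then not a bookkeeping residue but the core difficulty. The paper resolves it differently: it sets $\gamma_{N+1}(z,\zeta):=v_N(0,z,0,\zeta)$ (restriction to $s=\sigma=0$, which immediately kills the $(s,\sigma)$--dependence), absorbs the constants $p_N(0,z,0,\zeta)$ and $d_N(0,z,0,\zeta)$ by an explicit affine--in--$(s,\sigma)$ choice of $f$, and handles the remaining piece $sR_N^{(1)}+\sigma R_N^{(2)}$ by a deformation trick: the family $Q_\delta=Q_0+\delta(sR_N^{(1)}+\sigma R_N^{(2)})$ is again a non--degenerate codimension~$1$ crossing, hence reducible to $Q_0$ by the already--established principal step, and differentiating this reduction at $\delta=0$ produces the required $D$ and $\tilde f$.
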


\begin{rem}
This theorem allows to turn microlocally the equation $\op_\eps(Q)\psi^\eps=0$ into
$$\frac{\eps}{i}\partial_s u^\eps= \begin{pmatrix} s  & \Gamma^\eps_N \\ (\Gamma^\eps_N)^* & -s\end{pmatrix}u^\eps+O(\eps^{N+1})$$
where $\Gamma^\eps_N$ is an operator commuting with $s$ and $\partial_s$. The solutions of such systems are described in Proposition~7 of  \cite{FG03}.
\end{rem}

\begin{proof}
We proceed in two steps, following  the strategy of \cite{[CdV]} :
we first work on the classical symbols and find $B_0$ and
$\kappa$, then we work on the operator level and find $B_j$ for
$j\geq 1$. %TD: change j\geq 0 en j\geq 1
Assume that $\{\phi_0,\gamma\}(x^*,\xi^*)>0$. We first
use that $\{\phi_0,\gamma\sqrt{u_1^2+u_2^2+u_3^2}\}(x^*,\xi^*)>0$,
thus by Lemma 21.3.4 in \cite{[Ho]}, there exists a positive
function $\lambda=\lambda(x,\xi)$ such that
$$\left \{\lambda\phi_0,\lambda\gamma\sqrt{u_1^2+u_2^2+u_3^2}\right\}(x,\xi)=1.$$
Then, we set
$$\sigma(x,\xi)=\lambda(x,\xi)\phi_0(x,\xi),\;\;s(x,\xi)=\lambda(x,\xi)\gamma(x,\xi)\sqrt{u_1^2(x,\xi)+u_2^2(x,\xi)+u_3^2(x,\xi)}.$$
We then have
$$\displaylines{\qquad\lambda(x,\xi) Q(x,\xi)\hfill\cr\hfill=\sigma(x,\xi){\rm Id} +\frac{s(x,\xi)}{\sqrt{u_1^2(x,\xi)+u_2^2(x,\xi)+u_3^2(x,\xi)}}\begin{pmatrix} u_1(x,\xi) & u_2(x,\xi)+iu_3(x,\xi)\\ u_2(x,\xi)-iu_3(x,\xi) & -u_1(x,\xi) \end{pmatrix}.\qquad\cr}$$
The matrix $\displaystyle{\frac{1}{\sqrt{u_1^2+u_2^2+u_3^2}}\begin{pmatrix} u_1 & u_2+iu_3\\ u_2-iu_3
& -u_1 \end{pmatrix}}$  smoothly diagonalizes in a neighborhood of
the point $(x^*,\xi^*)$. There exists $B=B(x,\xi)$  smooth and
orthogonal such that
$$\lambda(x,\xi) B^*(x,\xi) Q(x,\xi) B(x,\xi)= \sigma(x,\xi) {\rm Id} + s(x,\xi) \begin{pmatrix} 1 & 0 \\ 0 & -1 \end{pmatrix}.$$
The fact that $\{\sigma,s\}(x,\xi)=1$ yields that by the Darboux theorem, there exists a canonical transform $\kappa_0$ such that $\kappa_0(x,\xi)=(s,z,\sigma,\zeta)$ and
$$\left(\lambda B^* Q B\right)\circ\kappa_0^{-1}(s,z,\sigma,\zeta)=  \sigma {\rm Id} + s \begin{pmatrix} 1 & 0 \\ 0 & -1 \end{pmatrix}.$$
Let us consider now a Fourier integral operator $U_0$ associated with $\kappa_0$. Setting
$$B_0=\sqrt\lambda B,$$
we have
$$\displaylines{\qquad U_0^*\op_\eps(B_0)^*\op_\eps(Q)\op_\eps(B_0)U_0=U_0^*\op_\eps(B_0^*QB_0)U_0+\eps\,\op_\eps \left(R_1 \right)\hfill\cr\hfill
=\op_\eps\left(\sigma {\rm Id} + s \begin{pmatrix} 1 & 0 \\ 0 & -1
\end{pmatrix}\right)+\eps\,\op_\eps(R_1)\qquad\cr}$$ where we have
used symbolic calculus and the definition of $U_0$. It remains now
to get rid of the rest term $R_1$ (which is self-adjoint) by
modifying the canonical transform $\kappa_0$ and the matrix $B_0$.
We proceed by induction: we suppose that we have find $\kappa_N$,
$U_N$, $B_0,\cdots, B_N$ and $\gamma_1,\cdots,\gamma_N$ such that
microlocally near $(0,0)$ we have in ${\cal
L}\left(L^2(\R^d)\right)$%TD: mis les adjoints au debut du produit. D'autre part j'ai remplace near $(x^*,\xi\*)$ par near $(0,0)$, et puis j'ai enleve l'operateur op_\eps(a): c'etait redondant avec "microlocally near...", surtout que dans la suite on ne met plus jamais d'operateur a . Enfin j'ai rajoute un O(\eps^{N+2}). Dans les calculs qui suivent on fait comme si $R_N$ ne dependait pas de \eps, ce qui n'est possible que si \eps^{N+1}R_N est le symbole principal du reste
$$U_N^*\op_\eps(B_N^\eps)^* \op_\eps (Q)\op_\eps  (B_N^\eps)U_N
=\op_\eps\left(Q_0+\Gamma^\eps_N\right)+\eps^{N+1}\op_\eps(R_N)+O\left(\eps^{N+2}\right)$$
where
$$Q_0(s,\sigma)=\sigma\,{\rm Id}+\begin{pmatrix} s & 0 \\ 0 &
-s\end{pmatrix},\quad
\Gamma_N^\eps=\begin{pmatrix}0 & \gamma_N^\eps \\
\ol\gamma_N^\eps & 0 \end{pmatrix}.$$  We decompose $R_N $ as
$$R_N = p_N \,{\rm Id} + \begin{pmatrix} d_N & v_N \\ \ol v_N
 & -d_N \end{pmatrix}=\tilde R_N + \begin{pmatrix} 0 & \gamma_{N+1} \\ \ol \gamma_{N+1}
 & 0 \end{pmatrix} ,$$
where $\gamma_{N+1}(z,\zeta)=v_N(0,z,0,\zeta)$, so that for some smooth functions $g$ and  $h$,%TD: phrase rajoutee
$$ v_N(s,z,\sigma,\zeta)=\gamma_{N+1}(z,\zeta)+s g(s,z,\sigma,\zeta) + \sigma h(s,z,\sigma,\zeta).$$
We claim that we can find a function $f$ and a matrix $D$ such that%TD: ici j'ai change l'equation. En esperant que je ne me suis pas trompe!
\begin{equation}\label{eq:homological}
\{f,Q_0\}+D^*Q_0+Q_0D+\tilde R_N=0.
\end{equation}
Let us postpone the proof of this claim at the end of the proof.

$ $

\noindent We then choose
$$B_{N+1}=B_0D,$$
and we  compose $\kappa_N$ with a perturbation of the identity constructed from $f$. We use a family of   canonical transforms
$\chi^\eps(\delta)$ for $\delta\in[0,1]$ which satisfy%Ici j'ai remplace le coefficient \eps^{N+12} en face de f par \eps^{N+1}. Par ailleurs j'ai enleve le 1 qui me semblait superflu. Mais peut-etre etait-il necessaire?
$$\frac{\d}{\d\delta}\chi^\eps(\delta)=H_{\eps^{N+1} f}\circ
\chi^\eps(\delta),\;\;\chi^\eps(0)={\rm Id}.$$  If
$U^\eps(\delta)$ is a Fourier integral operator associated with
$\chi^\eps(\delta)$, we have by \aref{eq:U}%TD: change la reference ici 
$$\frac{\d}{\d\delta}\left({U^\eps(\delta)^*}\op_\eps(a)U^\eps(\delta)\right)=\eps^{N+1}{U^\eps(\delta)^*}\op_\eps\Bigl(\left\{
f,a\right\}\Bigr)U^\eps(\delta)+O(\eps^{N+2})\;\;{\rm in}\;\;{\cal
L}\left(L^2(\R^d)\right).$$ Consider%TD: Ici j'ai mis un moins devant le terme en (1-\delta) plutot qu'un plus. Le changement se repercute dans les lignes d'apres
$$\displaylines{\qquad C^\eps(\delta)=U^\eps(\delta)^*\Bigl[U_N^*\op_\eps\left((B_N^\eps)^*+\delta\eps^{N+1}B_{N+1}^*\right)
\op_\eps(Q)\op_\eps\left(B_N^\eps+\delta\eps^{N+1}B_{N+1}\right)U_N\hfill\cr\hfill
-(1-\delta)\eps^{N+1}\op_\eps(\tilde R_N)
\Bigr]U^\eps(\delta).\qquad\cr}$$ Setting
$$U_{N+1}=U_N\circ U^\eps(1),$$
we have, microlocally near $(0,0)$%TD: ici j'ai remplace \Gamma_N^\eps par \Gamma_{N+1}^\eps, et O(\eps^{N+1}) par O(\eps^{N+1}). J'ai aussi rajoute le microlocally...
$$\displaylines{C^\eps(0)=\op_\eps(Q_0+\Gamma_{N+1}^\eps)+O(\eps^{N+2}),\cr
C^\eps(1)=U_{N+1}^*\op_\eps\left((B_{N+1}^\eps)^*)\op_\eps(Q)\op_\eps(B_{N+1}^\eps)\right)U_{N+1}.\cr}$$
Besides, by symbolic calculus, we obtain in ${\cal L}(L^2(\R^d))$
$$\displaylines{\quad
C^\eps(\delta)=U^\eps(\delta)^*\left[ \op_\eps\left(
Q_0+\Gamma^\eps_N
+\delta\eps^{N+1}(B_{N+1}^*QB_0+B_0^*QB_{N+1})+\delta\eps^{N+1}\tilde R_N\right)
 \right]U^\eps(\delta)\hfill\cr\hfill+O(\eps^{N+2})\qquad\cr\hfill
=U^\eps(\delta)^*\left[\op_\eps\left(Q_0+\Gamma_N^\eps+\eps^{N+1}\delta(D^*Q_0+Q_0D)+\delta\eps^{N+1}\tilde
R_N\right)\right]U^\eps(\delta) +O(\eps^{N+2})\cr}$$ where we have
used $B_0^*QB_0=Q_0$. Therefore, in view of \eqref{eq:U},
$$\frac{\d}{\d\,\delta}\,C^\eps(\delta)=\eps^{N+1}\,U^\eps(\delta)^*\op_\eps\left(\{f,Q_0\}+D^*Q_0+Q_0D+\tilde R_N\right)U^\eps(\delta)+O(\eps^{N+2})=O(\eps^{N+2})$$
in ${\cal L}(L^2(\R^d))$. Integrating between $\delta=0$ and $\delta =1$, we get
$$C^\eps(1)=C^\eps(0)+O(\eps^{N+2})\;\;{\rm in}\;\;{\cal L}(L^2(\R^d),$$
which gives the next step of the induction argument.

$ $

\ni It remains to prove that one can solve \aref{eq:homological}.
We write%TD: rajoute un moins ici, cf changements de signe precedents!
$$\tilde R_N=p_N(0,z,0,\zeta)\,{\rm Id} + d_N(0,z,0,\zeta)\,J+s R_N^{(1)}+\sigma R_N^{(2)},\;\;J=\begin{pmatrix}1 & 0 \\ 0 & -1\end{pmatrix}.$$
In view of $\{f,Q_0\}=-\partial_s\, f{\rm Id} + \partial_\sigma f J$,  setting $$ f(s,z,\sigma,\zeta)=\tilde f(s,z,\sigma,\zeta)+
s\,p_N(0,z,0,\zeta)-\sigma \,
 d_N(0,z,0,\zeta)$$
 we are reduced to find $\tilde f$ and $D$ such that
\begin{equation}\label{eq:homological'}
\{\tilde f,Q_0\}+D^*Q_0+Q_0D+sR_N^{(1)} +\sigma R_N^{(2)}=0.
\end{equation}
 For this, we consider $Q_\delta
=Q_0+\delta\left(sR_N^{(1)} +\sigma R_N^{(2)}\right)$ for small $\delta$.
The strategy is to prove that   there exist a matrix $B_\delta$ and a canonical transform $\kappa_\delta$   such that
$$B_\delta^* \left(Q_\delta\circ \kappa_\delta\right) B_\delta=Q_0.$$
Differentiating this last relation with respect to $\delta$ and putting $\delta =0$, one solves the homological equation \aref{eq:homological'}.
Let us consider the symbol $Q_\delta$. We have
$$Q_\delta=\sigma\left({\rm Id}+\delta R_N^{(2)}\right)+s\left(J+\delta R_N^{(1)}\right).$$
The matrix ${\rm Id}+\delta R_N^{(2)}$ is symmetric and invertible
for $\delta$ small enough in a neighborhood of $0$. Thus, if
$C_1=\left(\sqrt{{\rm Id}+\delta R_N^{(2)}}\right)^{-1}$ then
$C_1=C_1^*$ and
$$C_1 Q_\delta C_1 =\sigma{\rm Id}+sC_1(J+\delta R_N^{(1)})C_1.$$
We observe that
$$ C_1(J+\delta R_N^{(1)})C_1=J+O(\delta),\;\;
{\rm tr}\left(C_1(J+\delta R_N^{(1)})C_1\right)=O(\delta),$$
therefore, we can write $ C_1(J+\delta R_N^{(1)})C_1=J+\delta R_N^{(3)}$ and%TD: Dans la ligne suivante j'ai rajoute un $s$ devant le coefficient de la trace
$$C_1\, Q_\delta\, C_1=\left(\sigma+s\delta \,{\rm tr}\left( R_N^{(3)}\right)\right) {\rm Id}+s\left(J+\delta R_N^{(4)}\right),
\;\;{\rm tr}(R_N^{(4)})=0.$$
One easily see that the matrix $C_1 Q_\delta C_1$ has a codimension $1$ crossing as long as $\delta$ is small enough so that the
eigenvalues of~$\delta R_N^{(4)}$ are smaller than $1$. Therefore, by the first step of our proof, there exist a matrix $B_1$ and a canonical transform $\kappa(\delta) $ such that
$$(B_1^* C_1 Q_\delta C_1 B_1) \circ \kappa(\delta)=Q_0.$$
Setting   $B(\delta)=(C_1 B_1)\circ \kappa(\delta)$, %TD:enleve un $C$ qui avait l'air de trop
one  closes the proof of the normal form result. \qed
\end{proof}

\ni Thomas \textsc{Duyckaerts}, Universit\'e de Cergy Pontoise, 2 av. A. Chauvin, BP 222, 95302 Cergy-Pontoise Cedex, France.\\
{\tt Thomas.Duyckaerts@u-cergy.fr} 

\medskip

\ni Clotilde \textsc{Fermanian Kammerer}, Universit\'e Paris Est, 
UFR des Sciences et Technologie,
61, avenue du G\'en\'eral de Gaulle,
94010 Cr\'eteil Cedex, France.\\
{\tt Clotilde.Fermanian@univ-paris12.fr}

\medskip

\ni Thierry \textsc{Jecko}, Universit\'e de Cergy Pontoise,
2 av. A. Chauvin, BP 222, 95302 Cergy-Pontoise Cedex, France.\\
{\tt Thierry.Jecko@u-cergy.fr}
\end{document}